\documentclass[a4paper,aps,pra,showpacs,twocolumn,superscriptaddress]{revtex4-2} 

\usepackage[utf8]{inputenc}  
\usepackage[T1]{fontenc}     
\usepackage[american]{babel}  
\usepackage[svgnames,dvipsnames]{xcolor}
\usepackage{diagbox}
\usepackage[percent]{overpic}
\usepackage[colorlinks=true,citecolor=DarkBlue,linkcolor=DarkBlue,urlcolor=DarkBlue,linktocpage,unicode]{hyperref}
\usepackage{amsmath,amssymb,amsthm,bm,mathtools,amsfonts,mathrsfs,bbm,dsfont}
\usepackage{centernot}
\usepackage{adjustbox}
\newtheorem{theorem}{Theorem}
\newtheorem{observation}[theorem]{Observation}

\newcommand{\tr}{{\mathrm{tr}}}

\newcommand{\eins}{\mathbbm{1}}
\newcommand{\swap}{\mathbb{S}}

\renewcommand{\vec}[1]{\ensuremath{\boldsymbol{#1}}}
\usepackage{braket}
\usepackage[normalem]{ulem}
\usepackage{comment}
\usepackage{svg}

\begin{document}
\title{Characterizing quantum channels from local-unitary invariants}

\author{Salwa Shaglel\hyperlink{email1}{\textsuperscript{*}}}
\affiliation{Institute for Quantum Inspired and Quantum Optimization, Hamburg University of Technology, Blohmstraße 15, 21079 Hamburg, Germany}

\author{Satoya Imai\hyperlink{email2}{\textsuperscript{\textdagger}}}
\affiliation{Institute of Systems and Information Engineering, University of Tsukuba, Tsukuba, Ibaraki 305-8573, Japan}
\affiliation{Center for Artificial Intelligence Research (C-AIR), University of Tsukuba, Tsukuba, Ibaraki 305-8577, Japan}

\date{\today}

\begin{abstract}
    We develop systematic frameworks for characterizing the entanglement properties of two-qubit channels beyond unitary settings. We introduce averaged local-unitary invariants, referred to as moments, obtained from Haar integrals over input states or unitaries. These moments provide computable descriptions of how a quantum channel can create, preserve, or destroy bipartite entanglement. We first show that second-order moments yield criteria for non-entangling and entanglement-breaking channels, which allow us to detect entanglement-creating and entanglement-preserving channels. We then demonstrate that higher-order moments can capture additional information and distinguish channels beyond second-order moments alone. Finally, we show that combinations of moments associated with different channel families improve the discrimination of locally inequivalent two-qubit unitaries.
\end{abstract}
\maketitle

\section{Introduction}
Precise control and manipulation of quantum systems are essential for quantum information processing tasks, such as communication \cite{teleportation1993}, computation \cite{Jozsa_2003}, and cryptography \cite{Gisin2002}. In practical implementations, quantum systems are inevitably coupled to their surrounding environments and are thus subject to noise and decoherence. In particular, as entanglement is a key resource in this field, understanding how quantum operations affect and transform entanglement is crucial for characterizing fundamental capabilities and limitations of information transmission through quantum channels.

A major research line in this direction has focused on the entanglement-creating ability of unitary channels. In this context, the notion of entangling power has been introduced to quantify the average amount of entanglement that a unitary can create from product states \cite{zanardi2000}. This framework has been extended to more general bipartite \cite{Wang_2002} and multipartite cases \cite{Scott_2004, Linowski_2020}. Related studies have yielded a detailed geometric classification of two-qubit unitary gates \cite{Zhang_2003, Reza2004, Watts_2013, Balakrishnan_2009, Balakrishnan_2010, Musz_2013}, as well as complementary notions such as gate typicality \cite{Jonnadula2017} and entangling-power deviation \cite{Cho_2026}.

Another research line has characterized the ability of noisy quantum channels that destroy entanglement. Unlike unitary channels, decoherence can degrade quantum correlations and eliminate them in the worst case, rendering the resulting dynamics classically simulable. Such entanglement-breaking channels have been studied through canonical parameterizations and separability properties of their associated Choi states \cite{Horodecki_2003, Ruskai_2003}. This line has been further developed in studies of infinite-dimensional systems \cite{Kan_2013,Muoi_2025}, entanglement-breaking channels under repeated actions \cite{Pasquale_2012,Lami_2016}, and the emergence of non-classical temporal correlations \cite{Vieira_2025}.

Despite these developments, the present results remain incomplete in several aspects. First, the entangling power is restricted to unitary channels and therefore cannot be applied to general non-unitary channels. This limitation prevents analyzing the entanglement-creating ability of nearly unitary channels subject to even small amounts of noise, which are ubiquitous in realistic quantum devices. Second, the entangling power is not injective: distinct locally inequivalent unitaries can possess the same value. As a result, it does not provide a full characterization of the entanglement-related properties of quantum operations. Third, existing approaches to entanglement-breaking channels are primarily qualitative, yielding only a binary (yes-no) classification of whether a channel is entanglement-breaking. Thus, they do not offer insight into the extent to which a channel approaches or departs from the entanglement-breaking regime.

In this manuscript, we address these issues by developing systematic frameworks for characterizing the entanglement properties of two-qubit channels. Our approach is based on averaged local-unitary invariants obtained by integrating over quantum states or unitaries with respect to the Haar measure, which we refer to as \emph{moments}. These quantities are applicable to both unitary and non-unitary cases and are evaluated without resorting to computationally demanding optimization procedures. Within these frameworks, we characterize the abilities of quantum channels to create, preserve, and break entanglement. In particular, we derive criteria for non-entangling and entanglement-breaking channels based on second-order moments. Furthermore, we show that higher-order moments can reveal additional features of quantum channels that cannot be accessed within second-order descriptions. Finally, we demonstrate that combining moments of different channel families can further enhance discrimination of two-qubit unitaries.

\section{Local-unitary invariants}
For a two-particle quantum state $\rho_{AB}$, a function $f(\rho_{AB})$ is called \emph{local-unitary (LU) invariant} if it satisfies $f(\rho_{AB}) = f (V_A \otimes V_B \rho_{AB} V_A^\dagger \otimes V_B^\dagger)$ for any local unitary $V_A \otimes V_B$. So far, LU invariants have been used to analyze quantum states \cite{Grassl_1998, Linden_1999, Makhlin2000, Sudbery_2001, Kraus_2010, Jing_2015}. In particular, a complete characterization of two-qubit states via $18$ LU invariants has been provided by Makhlin \cite{Makhlin2000}. Two examples of such LU invariants are
\begin{subequations}
    \begin{align}
    I_2 (\rho_{AB})
    &\vcentcolon = \sum_{i,j=1,2,3} T_{i,j}^2, \label{Eq:I2}
    \\
    I_4 (\rho_{AB})
    &\vcentcolon = \sum_{i,j,k,l=1,2,3} T_{i,j} T_{k,j} T_{k,l} T_{i,l}. \label{Eq:I4}
    \end{align}
\end{subequations}
Here, $T_{i,j} \vcentcolon = \tr(\rho_{AB} \sigma_i \otimes \sigma_j)$ for a two-qubit state $\rho_{AB}$, where $\sigma_i$ denotes the $i$-th Pauli matrix ($\sigma_0 = \eins_2$).

The LU invariants $I_t (\rho_{AB})$ for $t=2,4$ can capture the amount of two-body quantum correlations in the two-qubit state. For example, the product state $\ket{0} \otimes \ket{0}$ satisfies $I_t (\ket{0}\! \bra{0} \otimes \ket{0}\! \bra{0}) = 1$, while the Bell state $\ket{\Psi^-} = (1/\sqrt{2}) (\ket{0} \otimes \ket{1} - \ket{1} \otimes \ket{0})$ has $I_t (\ket{\Psi^-}\! \bra{\Psi^-}) = 3$. This indicates that entangled states exhibit stronger two-body correlations than product states.

More generally, a mixed state is called separable if it can be written as a convex combination of product states, $\rho_{\rm sep} = \sum_i p_i \ket{a_i}\! \bra{a_i} \otimes \ket{b_i}\! \bra{b_i}$, where $p_i$ represents a probability distribution. Otherwise, the state is entangled. In fact, it is known that any two-qubit separable state obeys $I_t (\rho_{AB}) \leq 1$ for $t=2,4$. Conversely, if $I_t (\rho_{AB}) > 1$, then $\rho_{AB}$ is entangled \cite{de_Vicente_2007}. This criterion directly follows from the three properties: (i) $I_t (\rho_{AB})$ are LU invariant; (ii) $I_t (\ket{0} \otimes \ket{0}) = 1$; (iii) $I_t (\rho_{AB})$ are convex for all quantum states, namely $I_t (\sum_i p_i \rho_{i}) \leq \sum_i p_i I_t (\rho_{i})$, for any ensemble $\{p_i,\rho_i \}$.

In the following, we represent $\rho_{AB}$ as the output of a quantum channel $\Lambda$ for a pure input state $\ket{\psi_{AB}}$:
\begin{equation} \label{eq:setting_rho_AB}
    \rho_{AB} = \Lambda (\ket{\psi_{AB}}).
\end{equation}
Based on $I_t (\rho_{AB})$ for $=2,4$, we will characterize quantum channels in terms of their ability to create, break, and preserve two-qubit entanglement. In Section~\ref{sec:ENTcreating}, we set the input state as a product state, i.e., $\ket{\psi_{AB}} = \ket{\psi_A} \otimes \ket{\psi_B}$, and take the channel $\Lambda$ acting on both systems $AB$ that can create entanglement. In Section~\ref{sec:ENTbreaking}, we choose the input state as the maximally entangled state, i.e., the reduced state for one particle of $\ket{\psi_{AB}}$ is maximally mixed, and take the channel $\Lambda$ acting only on the subsystem $B$ that can break and preserve the initial entanglement. In Section~\ref{sec:relation}, we compare these abilities induced from two-qubit unitary channels.

\section{Entanglement-creating channels}\label{sec:ENTcreating}
Let $\Lambda$ be a quantum channel on a two-particle system $AB$. The channel $\Lambda$ is said to be \emph{non-entangling} if it does not generate an entangled state for any input product state $\ket{\psi_A} \otimes \ket{\psi_B}$, i.e., $\Lambda (\ket{\psi_A} \otimes \ket{\psi_B})$ is always separable for all $\ket{\psi_A} \otimes \ket{\psi_B}$. Otherwise, it is called \emph{entangling}, i.e., $\Lambda (\ket{\psi_A} \otimes \ket{\psi_B})$ can be entangled for some $\ket{\psi_A} \otimes \ket{\psi_B}$ \cite{brylinski_2001,Bremner_2002}. Clearly, all local unitary channels are non-entangling, while there is only one non-local channel that is non-entangling, namely the SWAP operator $\swap_{AB}$, defined by $\swap_{AB} \ket{\psi_A} \otimes \ket{\psi_B} =\ket{\psi_B} \otimes \ket{\psi_A}$ \cite{Hulpke2006,Alfsen2010}. Hereafter, we will refer to an entangling channel as \emph{entanglement-creating} to distinguish it from \emph{entanglement-breaking} and \emph{entanglement-preserving} channels, which will be discussed in Section~\ref{sec:ENTbreaking}.

Our task here is to characterize entanglement-creating channels, but this is generally difficult for two main reasons: (i) deciding whether a channel is entanglement-creating depends on the specific choice of input product states; and (ii) determining whether the output (mixed) state is entangled requires the solution to a separability problem (see \cite{Guhne_2009}). Previous studies have addressed these issues by introducing input-independent quantities averaged over all possible product states and by focusing primarily on unitary channels \cite{zanardi2000}. For unitary channels, the analysis can be further simplified because the output state is pure and its entanglement can be completely determined only by the reduced density matrices on the subsystems. In contrast, for non-unitary channels, the output state is mixed, making its characterization much more involved than that of unitary channels.

To proceed, let us focus on a two-qubit channel $\Lambda$ and introduce input-independent quantities $\mathcal{C}^{(t)} (\Lambda)$ for $t=2,4$, referred to as \emph{moments}. These are expressed in terms of the LU invariants $I_t$, which directly capture two-body correlations rather than through reduced density matrices:
\begin{subequations}\label{Eq:C^(t)}
    \begin{align}
    \mathcal{C}^{(2)} (\Lambda)
    &\vcentcolon = \int d\psi_A \int d\psi_B \,
    I_2[\Lambda (\ket{\psi_A} \otimes \ket{\psi_B})],
    \\
    \mathcal{C}^{(4)} (\Lambda)
    &\vcentcolon = \int d\psi_A \int d\psi_B \,
    I_4[\Lambda (\ket{\psi_A} \otimes \ket{\psi_B})],
    \end{align}
\end{subequations}
where the states $\ket{\psi_X}$ for $X=A,B$ are typically chosen according to the Haar distribution with $\int d\psi_X = 1$ (see \cite{Mele2024}). By construction, $\mathcal{C}^{(t)} (\Lambda)$ for $t=2,4$ are independent of the choice of the input product states.

Now we can present the following:
\begin{observation}\label{ob:NE-channel-criterion}
    Any two-qubit non-entangling channel obeys $\mathcal{C}^{(t)} (\Lambda) \leq 1$ for $t=2,4$. Conversely, if $\mathcal{C}^{(t)} (\Lambda) > 1$, then the channel is entanglement-creating.
\end{observation}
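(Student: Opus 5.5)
The plan is to reduce the statement to the separability criterion recalled in the previous section: every two-qubit separable state satisfies $I_t(\rho_{AB}) \leq 1$ for $t=2,4$ \cite{de_Vicente_2007}. By definition, a non-entangling channel $\Lambda$ maps every product input $\ket{\psi_A}\otimes\ket{\psi_B}$ to a separable state. So the integrand in Eq.~(\ref{Eq:C^(t)}) should be bounded pointwise by one, and averaging against a probability measure should preserve that bound.

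The steps are as follows.

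\textbf{Step 1.} Fix arbitrary $\ket{\psi_A},\ket{\psi_B}$ and write $\rho_{AB}=\Lambda(\ket{\psi_A}\otimes\ket{\psi_B}) = \sum_i p_i \ket{a_i}\!\bra{a_i}\otimes\ket{b_i}\!\bra{b_i}$, which is possible because $\Lambda$ is non-entangling.

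\textbf{Step 2.} Bound $I_t$ on each product term. Each product pure state is LU-equivalent to $\ket{0}\otimes\ket{0}$, so property (i) (LU invariance) and property (ii) give $I_t(\ket{a_i}\!\bra{a_i}\otimes\ket{b_i}\!\bra{b_i}) = 1$.

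\textbf{Step 3.} Apply convexity, property (iii), to obtain
\begin{equation*}
I_t(\rho_{AB}) \leq \sum_i p_i \, I_t(\ket{a_i}\!\bra{a_i}\otimes\ket{b_i}\!\bra{b_i}) = 1 .
\end{equation*}

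\textbf{Step 4.} Integrate this pointwise inequality over the Haar measures $d\psi_A\, d\psi_B$. Since they are normalized, $\int d\psi_X = 1$, monotonicity of the integral gives $\mathcal{C}^{(t)}(\Lambda) \leq 1$.

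The converse is the contrapositive. If $\mathcal{C}^{(t)}(\Lambda)>1$, then $\Lambda$ cannot be non-entangling, so it is entangling, i.e.\ entanglement-creating. One can also read this more concretely: since the average exceeds one, the integrand must exceed one on a set of positive measure, and any input from that set yields an entangled output.

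The only potentially delicate point is property (iii), the convexity of $I_t$. The paper takes it as given, so I would simply invoke it; if a justification were needed, I would argue as follows.
\begin{itemize}
\item Write $I_2 = \|T\|_2^2$ and $I_4 = \tr[(TT^{\top})^2] = \|T\|_4^4$, where $T$ is the correlation matrix.
\item The map $\rho \mapsto T$ is linear.
\item Each expression is a convex increasing function of a Schatten norm, and Schatten norms are convex.
\end{itemize}
Composing these gives convexity of $I_t$ in $\rho$. Measurability of the integrand is immediate, since it is a polynomial in the input state, so the integrals are well defined.
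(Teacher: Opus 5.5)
Your proposal is correct and matches the paper's proof: you bound $I_t \le 1$ pointwise on the separable outputs $\Lambda(\ket{\psi_A}\otimes\ket{\psi_B})$, integrate against the normalized Haar measures, and obtain the second claim as the contrapositive. Your derivation of that bound, using LU invariance, the value at $\ket{0}\otimes\ket{0}$, and convexity via $I_2=\|T\|_2^2$ and $I_4=\|T\|_4^4$, spells out what the paper simply cites from de Vicente.
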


\begin{proof}
    Recall the previously mentioned separability criterion: any two-qubit separable state obeys $I_t (\rho_{AB}) \leq 1$, where equality can be attained by pure states. Taking $\rho_{AB} = \Lambda (\ket{\psi_A} \otimes \ket{\psi_B})$ and using $\int d\psi_X = 1$, we can complete the proof.
\end{proof}

We make several remarks. First, the form of Eq.~(\ref{Eq:C^(t)}) can be simplified by considering the Kraus representation $\Lambda (\rho) = \sum_\alpha K_\alpha \rho K_\alpha^\dagger$ with $\sum_\alpha K_\alpha^\dagger K_\alpha = \eins$ for $K_\alpha$ being the Kraus operator acting on the system $AB$. By evaluating the Haar integrals, we have
\begin{widetext}
\begin{subequations}
    \begin{align}
    \mathcal{C}^{(2)} (\Lambda)
    &= \sum_{i,j=1}^3 \sum_{\alpha, \beta}
    \tr \! \left[
    K_{\alpha \beta}
    P_A^{(2)} \otimes P_B^{(2)}
    K_{\alpha \beta}^\dagger
    (\sigma_i^A \! \otimes \! \sigma_j^B \! \otimes \! 
    \sigma_i^A \! \otimes \! \sigma_j^B)
    \right],    
    \label{Eq:C^(2)}\\
    \mathcal{C}^{(4)} (\Lambda)
    &= \sum_{\substack{i,j,k,l=1}}^3
    \sum_{\alpha, \beta, \gamma, \delta}
    \tr \! \left[
    K_{\alpha \beta \gamma \delta}
    P_A^{(4)} \otimes P_B^{(4)}
    K_{\alpha \beta \gamma \delta}^\dagger
    (\sigma_i^A \! \otimes \! \sigma_j^B  \! \otimes \!
    \sigma_k^A \! \otimes \! \sigma_j^B \! \otimes \!
    \sigma_k^A \! \otimes \!  \sigma_l^B \! \otimes \!
    \sigma_i^A \! \otimes \! \sigma_l^B )
    \right],
    \label{Eq:C^(4)}
    \end{align}
\end{subequations}
\end{widetext}
where $K_{\alpha \beta} \vcentcolon = K_\alpha \otimes K_\beta$, $K_{\alpha \beta \gamma \delta} \vcentcolon = K_\alpha \otimes K_\beta \otimes K_\gamma \otimes K_\delta$, and $P_X^{(t)} = [1/(t+1)] \Tilde{P}_X^{(t)}$ for $X=A,B$ and $\Tilde{P}_X^{(t)}$ representing the projector onto the symmetric subspace of $t$ qubits for $t=2,4$. Here, $K_{\alpha \beta}$ acts on two copies of the two-qubit system, and $K_{\alpha \beta \gamma \delta}$ acts on four copies. The detailed calculation is provided in Appendix~\ref{ap:derivation_details_ENT_CRT}.

Next, for the channel $\Lambda (\rho) = \sum_\alpha K_\alpha \rho K_\alpha^\dagger$, it holds
\begin{equation}
     \label{eq:invariance}
     \mathcal{C}^{(t)}(\Lambda) = \mathcal{C}^{(t)}(\Lambda^{\prime}),
\end{equation}
where $\Lambda^{\prime} (\rho) = \sum_\alpha K_\alpha^{\prime} \rho (K_\alpha^{\prime})^\dagger$ with $K_\alpha^{\prime} = V_A \otimes V_B K_\alpha W_A \otimes W_B$ for local unitaries $V_A \otimes V_B$ and $W_A \otimes W_B$. This can be shown using two properties: (i) the LU invariance of $I_t$, i.e., $I_t(\rho_{AB}) = I_t (V_A \otimes V_B \rho_{AB} V_A^\dagger \otimes V_B^\dagger)$ for any local unitary $V_A \otimes V_B$; (ii) both left- and right-invariance of the Haar measure, i.e., $\int dU \, f(U) = \int dU \, f(UW) = \int dU \, f(WU)$ for any integrable function $f$ and for any unitary $W$, where we suppose that $\ket{\psi_A} \otimes \ket{\psi_B} = U_A \otimes U_B \ket{00}$.

In particular, for a unitary channel with $U_{AB}$, one can represent $\mathcal{C}^{(t)}(U_{AB}) = \mathcal{C}^{(t)}(U_{AB}^{\prime})$ with
\begin{equation} \label{eq:U_{AB}prime}
    U_{AB}^\prime = e^{i \sum_{i=1}^3 c_i \sigma_i \otimes \sigma_i},
\end{equation}
for $c_i \in [0, {\pi}/{4}]$. This follows the canonical decomposition introduced in Ref.~\cite{Zhang_2003}, which states that any two-qubit unitary $U_{AB}$ can be decomposed as $U_{AB} = V_A \otimes V_B U_{AB}^\prime W_A \otimes W_B$ for some unitaries $V_A, V_B, W_A, W_B$. By inserting Eq.~(\ref{eq:U_{AB}prime}) into Eqs.~(\ref{Eq:C^(2)},\ref{Eq:C^(4)}), we then conclude that the moments $\mathcal{C}^{(2)}(U_{AB})$ for any two-qubit unitary $U_{AB}$ can be given by
\begin{subequations}
    \begin{align}
    \mathcal{C}^{(2)}(U_{AB})
    &\!=\! \frac{1}{9} \Bigl\{ 15 - \cos(4c_{12}^-) - \cos(4c_{12}^+) \nonumber \\
    &\
    - 2 \bigl[\cos(4c_1) + \cos(4c_2)\bigr] \cos(4c_3)\Bigl\},
    \label{Eq:EC2}
    \\
    \mathcal{C}^{(4)} (U_{AB})
    &\!=\! \frac{1}{900} \Bigl\{ 1218+4 \cos(8 c_1)-76 \cos(4c_{12}^-) \nonumber \\
    &\
    +9 \cos(8c_{12}^-)+4 \cos(8 c_2) -76 \cos(4c_{12}^+) 
    \nonumber \\
    &\
    + 4 \bigl[ 
    1+3 \cos(4c_{12}^-) \bigl] \bigl[1 +3 \cos(4c_{12}^+) \bigl] \cos(8 c_3)    
    \nonumber\\
    &\
    +9 \cos(8c_{12}^+) + 8 \bigl[\cos(4 c_1)+\cos(4 c_2)\bigl]
    \nonumber \\
    &\
    \times \bigl[6 \cos(4 c_1) \cos(4 c_2) -22 \bigl] \cos(4 c_3)
    \Bigl\},
    \label{Eq:EC4}
    \end{align}
\end{subequations}
where $c_{12}^+ = c_1+c_2$ and $c_{12}^- = c_1-c_2$. The detailed calculation is provided in Appendix~\ref{ap:derivation_details_ENT_CRT_noisy_case}.

Third, for unitary channels, the second moment $\mathcal{C}^{(2)}(U_{AB})$ recovers the well-known expression of the \emph{entangling power} \cite{Reza2004}, up to some factors. This can be seen from the fact that, for any two-qubit state $\rho_{AB}$, the following identity holds:
\begin{equation}
    I_2(\rho_{AB}) = 1 + 4\tr(\rho_{AB}^2) - 2 [\tr(\rho_{A}^2) + \tr(\rho_{B}^2)],
\end{equation}
where $\rho_{X}$ for $X=A,B$ is the reduced density matrix of the subsystem $A,B$. In the case of unitary channels, $\rho_{AB}$ is a pure state, and then $\tr(\rho_{A}^2) = \tr(\rho_{B}^2)$ holds. Thus, the second moment $\mathcal{C}^{(2)}(\Lambda)$ can be regarded as a mixed-state generalization of the entangling power.

Fourth, the fourth moment $\mathcal{C}^{(4)}(\Lambda)$ can contain more precise information about the channel than the second moment $\mathcal{C}^{(2)}(\Lambda)$. In fact, $\mathcal{C}^{(4)}(\Lambda)$ can distinguish between certain channels that cannot be discriminated using only $\mathcal{C}^{(2)}(\Lambda)$. Examples are given by two unitary channels: the controlled-NOT (CNOT) unitary $U_{\rm CNOT} = e^{i \frac{\pi}{4} (\eins_2 - \sigma_3) \otimes (\eins_2-\sigma_1)}$ and the so-called B-unitary $U_{\rm B} = e^{i \frac{\pi}{8} (2 \sigma_1 \otimes \sigma_1 + \sigma_2 \otimes \sigma_2)}$. A straightforward calculation leads to
\begin{subequations}
\begin{align}
    \mathcal{C}^{(2)}(U_{\rm CNOT})
    &= \mathcal{C}^{(2)}(U_{\rm B}) = \frac{17}{9} \approx 1.889,
    \\
    \mathcal{C}^{(4)}(U_{\rm CNOT})
    &= \frac{353}{225} \approx 1.569,
    \\
    \mathcal{C}^{(4)}(U_{\rm B})
    &= \frac{23}{15} \approx 1.533.
\end{align}
\end{subequations}
Similar results have been reported by showing that higher-order moments can distinguish unitary channels with the same entangling power \cite{Cho_2026}.

On the other hand, the fourth moment $\mathcal{C}^{(4)}(\Lambda)$ cannot provide a full distinction of quantum channels, i.e., there exist some quantum channels that cannot be distinguished by $\mathcal{C}^{(4)}(\Lambda)$. Examples are the CNOT unitary $U_{\rm CNOT}$ and the so-called $i$SWAP unitary $U_{i {\rm SWAP}} = R_{xx}(- {\pi}/{2})R_{yy}(- {\pi}/{2})$ with $R_{\alpha \alpha}(\theta) = e^{- i \frac{\theta}{2} \sigma_\alpha \otimes \sigma_\alpha}$. Note that previous studies have addressed this issue by distinguishing between local and non-local two-qubit unitaries using \emph{gate-typicality} \cite{Jonnadula2017}.

Finally, we note that Observation~\ref{ob:NE-channel-criterion} is not a necessary condition for two-qubit entanglement-creating channels. Then, there may be some entanglement-creating channels that satisfy the bound $\mathcal{C}^{(t)} (\Lambda) \leq 1$. Our next aim is to characterize such channels by finding another criterion that can be stronger than Observation~\ref{ob:NE-channel-criterion}.

One approach is to use a combination of $\mathcal{C}^{(2)}(\Lambda)$ and $\mathcal{C}^{(4)}(\Lambda)$, motivated by previous studies in entanglement detection via randomized measurements \cite{Ketterer_2019,Ketterer2020,Knips_2020, Ketterer_2022,Wyderka2023,CIESLINSKI2024}. In contrast, deriving criteria for an \textit{arbitrary} two-qubit entanglement-creating channel is generally more challenging than establishing entanglement criteria for states. This is because channel characterization requires specifying the full dynamical map rather than a single output state, leading to a higher-dimensional parameter space and additional constraints such as complete positivity and trace preservation. 

To proceed, let us fix a concrete channel model:
\begin{equation}
    \Lambda_p (\rho) = p U_{AB} \rho U_{AB}^\dagger + (1-p)\frac{\eins_{AB}}{4},
    \label{eq:Lambda_p}
\end{equation}
which represents a probabilistic mixture of a two-qubit unitary channel and complete depolarization as a noisy effect. We can then summarize our results as follows:
\begin{observation}   \label{ob:C2-C4-NE-channel-criterion}
    Consider the space spanned by the moments $\mathcal{C}^{(2)}(\Lambda_p)$ and $\mathcal{C}^{(4)}(\Lambda_p)$ for the channel $\Lambda_p$ as in Eq.~(\ref{eq:Lambda_p}), shown in Fig.~\ref{fig:EC_plot}~(a). For any two-qubit unitary $U_{AB}$, if the channel $\Lambda_p$ is non-entangling, then it obeys $F_{\rm NE}[\mathcal{C}^{(2)}(\Lambda_p), \mathcal{C}^{(4)}(\Lambda_p)] \geq 0$, where $F_{\rm NE}$ defines the boundary of the non-entangling region. A violation of this inequality, corresponding to a point lying outside the non-entangling region, identifies a broader class of entanglement-creating cases, compared to the criterion based solely on $\mathcal{C}^{(t)}(\Lambda_p) \leq 1$ given in Observation~\ref{ob:NE-channel-criterion}.
\end{observation}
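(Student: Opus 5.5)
The idea is to make the observation concrete by computing both moments of $\Lambda_p$ in closed form as functions of $(p; c_1,c_2,c_3)$. We then determine which parameter values give non-entangling channels, and take $F_{\rm NE}$ to be the boundary of the image of that set in the $(\mathcal{C}^{(2)},\mathcal{C}^{(4)})$ plane.

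First, write the output for a product input as
\begin{equation*}
\Lambda_p(\ket{\psi_A}\otimes\ket{\psi_B}) = p\,\ket{\phi}\!\bra{\phi} + (1-p)\,\eins_{AB}/4,
\end{equation*}
with $\ket{\phi}=U_{AB}\ket{\psi_A\psi_B}$. The identity part contributes nothing to the correlation matrix, so $T_{i,j}(\Lambda_p(\cdot)) = p\,T_{i,j}(\ket{\phi})$ for $i,j\geq 1$. Hence
\begin{equation*}
\mathcal{C}^{(2)}(\Lambda_p)=p^2\,\mathcal{C}^{(2)}(U_{AB}), \qquad \mathcal{C}^{(4)}(\Lambda_p)=p^4\,\mathcal{C}^{(4)}(U_{AB}).
\end{equation*}
By the LU invariance (\ref{eq:invariance}) and the canonical form (\ref{eq:U_{AB}prime}), both moments are then explicit functions of $(p,c_1,c_2,c_3)$ through Eqs.~(\ref{Eq:EC2}) and (\ref{Eq:EC4}).

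Second, characterize exactly when $\Lambda_p$ is non-entangling. The output state is a Werner-like mixture of a pure state with white noise. By the PPT criterion for two qubits, it is separable iff $p\leq 1/(1+2\lambda_{\max}(\phi))$, where $\lambda_{\max}(\phi)=\sqrt{\lambda_0\lambda_1}$ is built from the Schmidt coefficients of $\ket{\phi}$, i.e.\ it is half the concurrence. So $\Lambda_p$ is non-entangling iff
\begin{equation*}
p\leq \frac{1}{1+\max_{\psi_A,\psi_B} C(U_{AB}\ket{\psi_A\psi_B})}.
\end{equation*}
The maximal concurrence generated from product inputs is known for the canonical form (Kraus--Cirac / Zhang et al.). It equals $1$ when the convex hull condition on $e^{\pm 2ic_k}$-type phases contains the origin, and is otherwise $\max|\sin(\cdot)|$ of the appropriate combination of the $c_i$. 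This gives an explicit threshold $p^\ast(c_1,c_2,c_3)$.

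Third, map the non-entangling set $\{(p,\vec c): p\leq p^\ast(\vec c)\}$ into the plane via $(p^2\mathcal{C}^{(2)}(\vec c),\,p^4\mathcal{C}^{(4)}(\vec c))$. We define $F_{\rm NE}\geq 0$ as the region bounded by the image of the curve $p=p^\ast(\vec c)$, together with the trivial image of the local/SWAP unitaries. For each fixed $\vec c$, decreasing $p$ moves the point monotonically toward the origin along the curve $y\propto x^2$. Therefore the non-entangling region is the union of these rays truncated at $p^\ast$, and its outer boundary is the envelope over $\vec c$ of the endpoints. Any point outside this envelope cannot come from a non-entangling $\Lambda_p$, which proves the stated inequality. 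To show strict improvement over Observation~\ref{ob:NE-channel-criterion}, it suffices to exhibit one entangling $\Lambda_p$ with both moments $\leq 1$ that lies outside the envelope. A candidate is CNOT at $p$ slightly above $p^\ast=1/2$: there $\mathcal{C}^{(2)}\approx 17/36$, well below $1$.

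The main obstacle is the third step: computing the envelope over the three-dimensional Weyl chamber. Closed forms are unlikely, so I would first check whether the envelope is attained on a one-parameter family, e.g.\ $c_2=c_3=0$ (the CNOT-like family) or the edge to the B-gate. I would scan the chamber numerically to confirm this. If no such family works, I would define $F_{\rm NE}$ only implicitly, as the numerically determined boundary shown in Fig.~\ref{fig:EC_plot}~(a).
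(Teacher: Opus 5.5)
Your plan is sound in structure, but the separability threshold in your second step is off by a factor of two.

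\textbf{The threshold.} For $\ket{\phi}=\sqrt{\lambda_0}\ket{00}+\sqrt{\lambda_1}\ket{11}$, the partial transpose of $\ket{\phi}\!\bra{\phi}$ has eigenvalues $\lambda_0,\lambda_1,\pm\sqrt{\lambda_0\lambda_1}$. So PPT for $p\ket{\phi}\!\bra{\phi}+(1-p)\eins_{AB}/4$ means $(1-p)/4\geq p\sqrt{\lambda_0\lambda_1}$, i.e.\ $p\leq 1/(1+4\sqrt{\lambda_0\lambda_1})=1/(1+2C)$, not $1/(1+C)$. The sanity check is the Werner threshold $p=1/3$ for a maximally entangled $\ket{\phi}$, which the paper itself uses for the special perfect entanglers.

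Hence $\Lambda_p$ is non-entangling iff $p\leq 1/(1+2\max C)$. CNOT has $p^\ast=1/3$, so your example should be CNOT at $p\gtrsim 1/3$, with $\mathcal{C}^{(2)}\approx 17/81$.

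As written, the set you call non-entangling contains entangling channels (CNOT with $1/3<p\leq 1/2$). So your ``iff'' is false, and the curve you would draw is not the boundary of the non-entangling region. The one-way inequality still survives, but only because you would be working with a superset of the true set, which costs detection power.

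\textbf{The example.} You also need that no other admissible $\vec c$ with the same ratio $\mathcal{C}^{(4)}/(\mathcal{C}^{(2)})^2$ reaches further out along that parabola. This is not automatic, because CNOT does not minimize the ratio (the B gate does). It does follow if every $\vec c$ with that ratio is a perfect entangler, since then $(p^\ast)^2\,\mathcal{C}^{(2)}(\vec c)\leq 17/81$.

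\textbf{Comparison with the paper.} With the threshold fixed, your route differs from the paper's in the key step. The paper shares your first step ($\mathcal{C}^{(t)}(\Lambda_p)=p^t\mathcal{C}^{(t)}(U_{AB})$). For the constraint, however, it imposes PPT only on $\rho'_{AB}$, the output for the single input $\ket{00}$ after canonical reduction. From the coefficients $m_a$ of the canonical expansion one gets $U'_{AB}\ket{00}=e^{ic_3}[\cos(c_1-c_2)\ket{00}+i\sin(c_1-c_2)\ket{11}]$, so that constraint is $p\leq 1/(1+2|\sin 2(c_1-c_2)|)$.

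This is necessary for non-entangling, so it also yields a valid outer region, but a loose one: it is vacuous whenever $c_1=c_2$ (e.g.\ $i$SWAP, $\sqrt{\mathrm{SWAP}}$). Your use of the maximal product-input concurrence (Kraus--Cirac) gives the exact non-entangling set. That buys the tightest boundary and the largest detected entangling region attainable within this channel model. The price is a piecewise, non-smooth constraint (the convex-hull condition) inside the optimization, instead of the paper's one-line formula.
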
  

\begin{figure}[t]
        \begin{overpic}[width=1.0\linewidth]{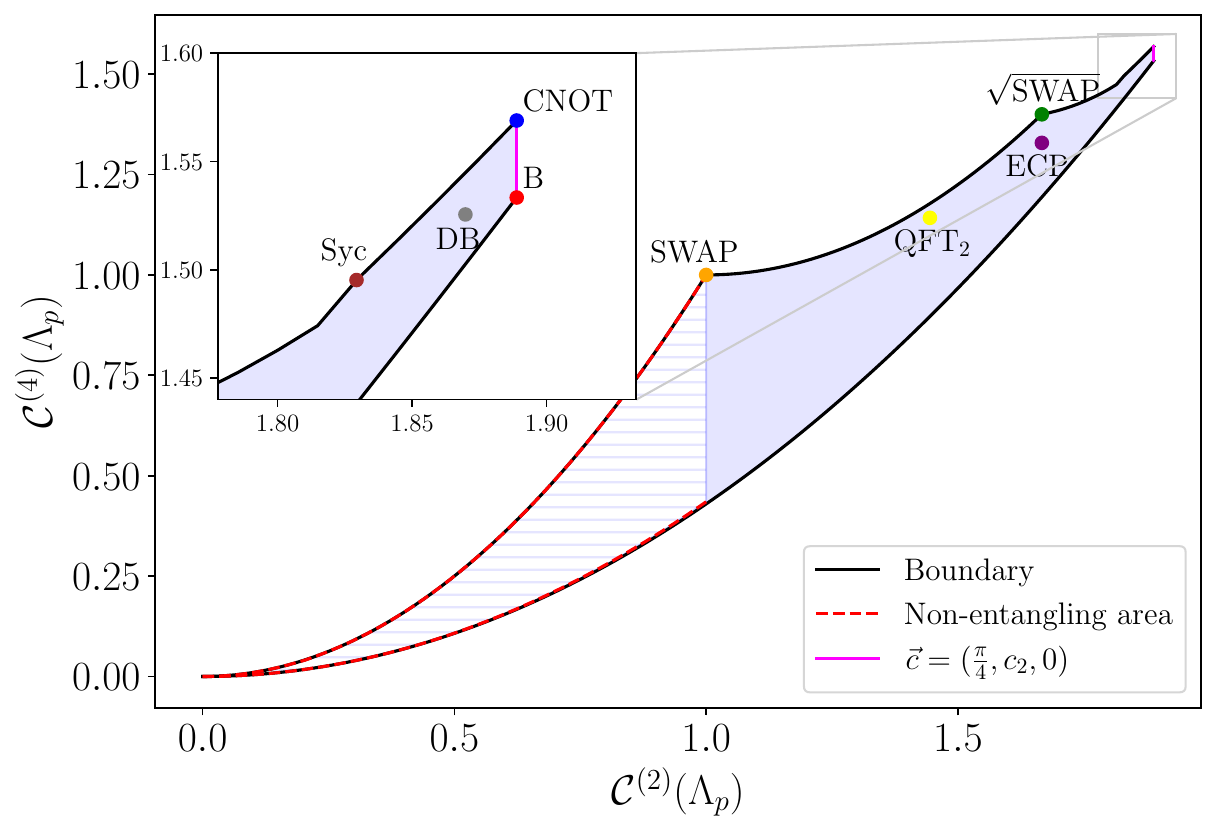}
            \put(2,72){{(a)}}  
        \end{overpic}
    
        \vspace{0.3cm}
    
        \begin{overpic}[width=1.0\linewidth]{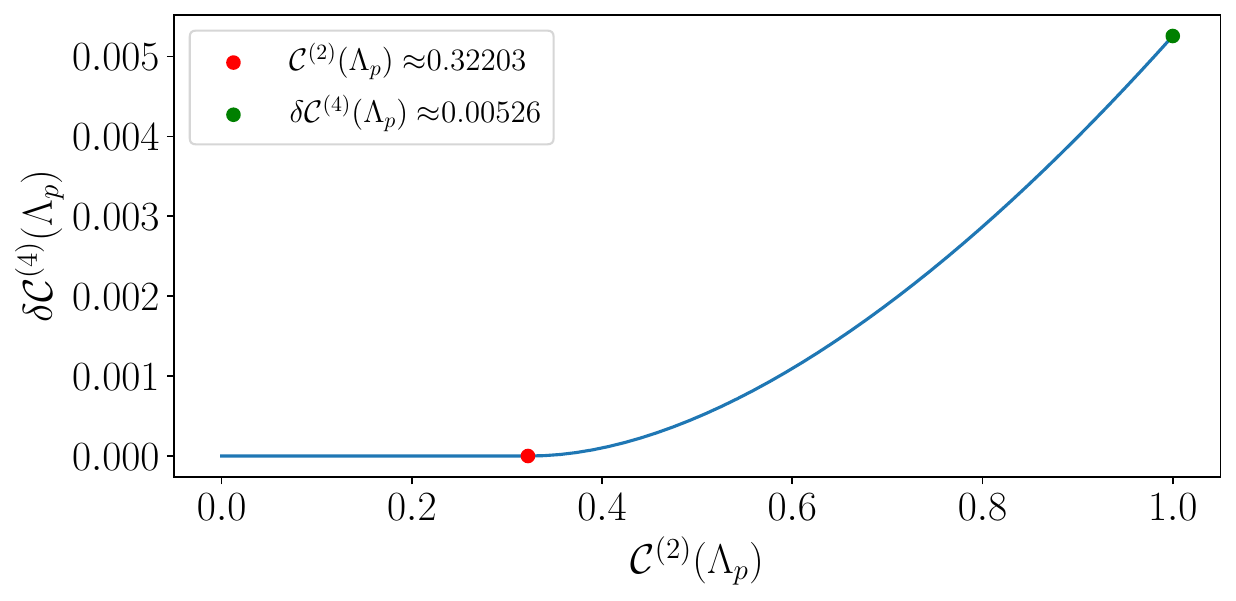}
            \put(2,52){{(b)}}
        \end{overpic}
        
        \caption{(a) Representation of entangling and non-entangling channels in the space spanned by the second and fourth moments, $\mathcal{C}^{(2)}(\Lambda_p)$ and $\mathcal{C}^{(4)}(\Lambda_p)$, for a two-qubit channel model $\Lambda_p$ defined in Eq.~(\ref{eq:Lambda_p}). All non-entangling cases are contained in the hashed region characterized by the dashed red curves with $F_{\rm NE}[\mathcal{C}^{(2)}(\Lambda_p), \mathcal{C}^{(4)}(\Lambda_p)] \geq 0$, presented in Observation~\ref{ob:C2-C4-NE-channel-criterion}. The blue region corresponds to entanglement-creating cases, which marks the improvement of the criterion in Observation~\ref{ob:NE-channel-criterion}. The inset shows several well-known unitaries with $p=1$ according to their canonical parameters $\vec{c}$ given in Table \ref{tab:gates}. (b) The gap $\delta \mathcal{C}^{(4)}(\Lambda_p)$ between the red and black lower curves in Fig.~1 (a) with increasing $\mathcal{C}^{(2)}(\Lambda_p)$.}
        \label{fig:difference_with_ppt}
        \label{fig:EC_plot}
\end{figure}

\begin{table*}[t]
    \centering
    \begin{adjustbox}{width=0.75\textwidth}
    \begin{tabular}{|c|c c c c c|}
        \hline
          \backslashbox{$U_{AB}$ }{} & $ \vec{c} $ & \hspace{0.35cm} $\mathcal{C}^{(2)}(U_{AB}) $ \hspace{0.4cm} &  $ \mathcal{C}^{(4)}(U_{AB}) $ & \hspace{0.4cm}$\mathcal{P}^{(2)}(\Gamma_{U_{AB}})$ & \hspace{0.4cm} $\mathcal{P}^{(4)}(\Gamma_{U_{AB}})$ \\
        \hline
        CNOT & $(\frac{\pi}{4}, 0, 0)$ &  17/9  &  353/225 
         & 1 & 1\\
        B & $(\frac{\pi}{4}, \frac{\pi}{8}, 0)$ & 17/9  &   23/15 & 1/2 & 1/4\\
        $i$SWAP & $(\frac{\pi}{4}, \frac{\pi}{4}, 0)$ &   17/9  &   353/225  & 0 & 0  \\
        DB & $(\frac{3\pi}{16}, \frac{3\pi}{16}, 0)$ & $\approx$1.870 & $\approx$1.526 & $\approx 0.3143$ & $\approx 0.0433$\\
        Syc & $(\frac{\pi}{4}, \frac{\pi}{4}, \frac{\pi}{24})$ & $\approx$1.829 & $\approx$1.495& 0 & 0 \\
        $\sqrt{\text{SWAP}}$ & $(\frac{\pi}{8}, \frac{\pi}{8}, \frac{\pi}{8})$ & 15/9  & 7/5  & 3/4 & 3/16\\
        ECP & $(\frac{\pi}{4}, \frac{\pi}{8}, \frac{\pi}{8})$ & 15/9  & 299/225  & 1/4 & 1/16\\
        QFT$_2$ & $(\frac{\pi}{4}, \frac{\pi}{4}, \frac{\pi}{8})$ & 13/9  & 257/225  & 0 & 0\\
       SWAP & $(\frac{\pi}{4}, \frac{\pi}{4}, \frac{\pi}{4})$ & 1 & 1 & 0 & 0\\ \hline
    \end{tabular}
    \end{adjustbox}
    \caption{Common two-qubit unitaries' canonical parameters $\vec{c}=(c_1, c_2, c_3)$ and their values of $\mathcal{C}^{(2)}(U_{AB})$,  $\mathcal{C}^{(4)}(U_{AB})$, $\mathcal{P}^{(2)}(\Gamma_{U_{AB}})$, and $\mathcal{P}^{(4)}(\Gamma_{U_{AB}})$ illustrated in Fig.~\ref{fig:EC_plot} and Fig.~\ref{fig:Pt_Ct}, respectively. The CNOT, $i$SWAP, $\sqrt{\mathrm{SWAP}}$, and SWAP gates are well-known two-qubit unitaries. The B-gate and both DB and ECP gates are two-qubit gates introduced in \cite{Zhang_2004} and \cite{Peterson_2020}, respectively. The Syc-gate is the native two-qubit gate on Google's Sycamore quantum computing architecture \cite{Google_2019}. The QFT$_2$-gate refers to the quantum Fourier transform two-qubit gate.
    }
    \label{tab:gates}
\end{table*}

\begin{proof}
    We begin by writing
    \begin{equation}
        I_t[\Lambda_p (\ket{\psi_A} \otimes \ket{\psi_B})]
        = p^t I_t[ U_{AB} \ket{\psi_A} \otimes \ket{\psi_B}],
    \end{equation}
    where we used $T_{i,j} [\Lambda_p (\ket{\psi_A} \otimes \ket{\psi_B} )] = p T_{i,j} [U_{AB}\ket{\psi_A} \otimes \ket{\psi_B}]$ for $T_{i,j} [\rho_{AB}] = \tr(\rho_{AB} \sigma_i \otimes \sigma_j)$. Then, we have
    \begin{equation}
        \mathcal{C}^{(t)}(\Lambda_p)
        = p^t \mathcal{C}^{(t)}(U_{AB}),
    \end{equation}
    where $\mathcal{C}^{(t)}(U_{AB})$ for $t=2,4$ have been evaluated previously in Eqs.~(\ref{Eq:EC2},\ref{Eq:EC4}). Consequently, we can plot Fig.~\ref{fig:EC_plot} that represents the space spanned by the moments $\mathcal{C}^{(2)}(\Lambda_p)$ and $\mathcal{C}^{(4)}(\Lambda_p)$ for a general $\Lambda_p$. This region can be obtained by maximizing/minimizing $\mathcal{C}^{(4)}(\Lambda_p)$ over $p, c_1, c_2, c_3$ for a fixed value $\mathcal{C}^{(2)}(\Lambda_p)$, subject to $0 \leq p \leq 1$ and $0\leq c_i \leq \pi/4$.

    To identify the border of the non-entangling set, we must impose additional constraints on the above optimization. In fact, we can show that such constraints are given by the separability condition of the state
    \begin{equation}\label{eq:rhoprimesep}
    \rho_{AB}^\prime
    = p U_{AB}^\prime \ket{00} \! \bra{00} (U_{AB}^\prime)^\dagger
    + (1-p)\frac{\eins_{AB}}{4},
    \end{equation}
    where $U_{AB}^\prime$ is given in Eq.~(\ref{eq:U_{AB}prime}).
    
    This can be shown as follows. First, recall that $\Lambda_p$ is non-entangling if $\Lambda_p (\ket{\psi_A} \otimes \ket{\psi_B})$ is separable for any input product state. Also, separability is invariant under local unitary transformations: for any separable state $\rho_{\rm sep}$ and for any local unitary $V_A \otimes V_B$, the state $V_A \otimes V_B \rho_{\rm sep} V_A^\dagger \otimes V_B^\dagger$ is also separable. By letting $\ket{\psi_X} = W_X \ket{0}$ for $X=A,B$, we find that $\Lambda_p$ is non-entangling if the state $\rho_{AB}^\prime$ as in Eq.~(\ref{eq:rhoprimesep}) is separable, where $U_{AB}^\prime = V_A \otimes V_B U_{AB} W_A \otimes W_B$.
    
    For two-qubit systems, the separability problem is solved by the positive partial transpose (PPT) criterion \cite{Peres_1996, Horodecki_1996}: Any two-qubit state is separable if and only if its partial transpose state is positive semidefinite. Thus, applying the PPT criterion for the state $\rho_{AB}^\prime$ gives further parameter constraints on $p, c_1, c_2, c_3$. Incorporating the resulting constraints into the optimization determines the boundary of the non-entangling region corresponding to the separable domain.
\end{proof}

In Fig.~\ref{fig:EC_plot}~(a), several well-known two-qubit unitary gates with $p=1$ are located according to their canonical parameters $\vec{c} = (c_1 , c_2, c_3)$, as listed in Table~\ref{tab:gates}. All listed unitaries, except SWAP and QFT$_2$, belong to the class of \emph{perfect entanglers}, which can generate a maximally entangled state from a product input state. Among them, the CNOT, B, and $i$SWAP gates are called the \emph{special perfect entanglers}, as they can transform an entire orthonormal product basis into a maximally entangled basis \cite{Reza2004}. All such unitaries correspond to $\vec{c} = (\pi/4, c_2, 0)$ for $c_2 \in [0, \pi/4]$ and lie along the pink vertical line with $\mathcal{C}^{(2)}(U_{AB}) = 17/9$.

Also, the solid black curves in Fig.~\ref{fig:EC_plot}~(a) represent the whole bounds for the moments of $\Lambda_p$, while the dashed red curves indicate the non-entangling region. Here, the dashed red upper bound coincides exactly with the solid black upper bound, which corresponds to the case where the unitary $U_{AB}$ in Eq.~(\ref{eq:Lambda_p}) is the SWAP operator $\swap_{AB}$ for all values of $p$. In contrast, the dashed red lower and solid black lower bounds do not match, although this is invisible in Fig.~\ref{fig:EC_plot}~(a). This deviation is illustrated in Fig.~\ref{fig:difference_with_ppt}~(b), which shows the difference $\delta \mathcal{C}^{(4)}(\Lambda_p)$ between both lower bounds as a function of $\mathcal{C}^{(2)}(\Lambda_p)$. There the gap becomes noticeable around $\mathcal{C}^{(2)}(\Lambda_p) \approx 0.32203$ (red point) and reaches its maximum value of $\delta\mathcal{C}^{(4)}(\Lambda_p) \approx 0.00526$ (green point) at $\mathcal{C}^{(2)}(\Lambda_p) = 1$. The optimization was performed with tight convergence tolerances of $10^{-12}$ (details are in Appendix~\ref{ap:optimization_details}). 

More analytically, the solid black lower curve corresponds to the case with $\vec{c}=(\pi/4, \pi/8, 0)$ for all values of $p$. The (invisible) gap region between the dashed red lower and solid black lower bounds contains an entanglement-creating class with $\vec{c}=(\pi/4, c_2, 0)$, where $c_2 \in [\pi/8,\pi/4]$ and $p \in (1/3, 1]$. In fact, we can show that $p=1/3$ is the exact transition point between the entanglement-creating and non-entangling regimes for this family, meaning that all cases with $\vec{c}=(\pi/4, c_2, 0)$ for $p \leq 1/3$ are non-entangling. This follows from the fact that Eq.~(\ref{eq:Lambda_p}) becomes LU equivalent to the two-qubit Werner state \cite{Werner_1989} at $\vec{c}=(\pi/4, c_2, 0)$, since the unitary $U_{AB}$ in Eq.~(\ref{eq:Lambda_p}) belongs to the class of special perfect entanglers.

\section{Entanglement-breaking and entanglement-preserving channels} \label{sec:ENTbreaking}
In the previous section, we have characterized the ability to create entanglement from a product state using the moments $\mathcal{C}^{(t)}(\Lambda)$ for a quantum channel $\Lambda$ acting on the two-particle system $AB$. Below, we instead analyze the abilities to break and preserve entanglement in an initial state for a channel acting only on one subsystem: $\Lambda = {\rm id} \otimes \Gamma$. Here, ${\rm id}$ is the identity channel on the subsystem $A$, and $\Gamma$ is a channel on the subsystem $B$.

Specifically, the channel $\Gamma$ is called \emph{entanglement-breaking} if it generates a separable state for any input entangled state $\rho_{\rm ent}$, i.e., $({\rm id} \otimes \Gamma) (\rho_{\rm ent})$ is always separable for all $\rho_{\rm ent}$ \cite{Horodecki_2003}. Otherwise, we call it \emph{entanglement-preserving}, i.e., $({\rm id} \otimes \Gamma) (\rho_{\rm ent})$ can be entangled for some $\rho_{\rm ent}$. It is known that a channel $\Gamma$ is entanglement-breaking if and only if $({\rm id} \otimes \Gamma) (\ket{\Omega}\! \bra{\Omega})$ is separable \cite{Ruskai_2003}, where $\ket{\Omega} = (1/\sqrt{d}) \sum_{i} \ket{i}\otimes \ket{i}$ is a maximally entangled state.

In the following, we focus on a single-qubit channel $\Gamma$ and characterize its entanglement-breaking and entanglement-preserving properties. To this end, similarly to Eq.~(\ref{Eq:C^(t)}), we introduce the moments $\mathcal{P}^{(t)} (\Gamma)$ for $t=2,4$ defined through the LU invariants $I_t$:
\begin{subequations}\label{Eq:B^(t)}
    \begin{align}
    \mathcal{P}^{(2)} (\Gamma)
    &\vcentcolon = \int dU \,
    I_2[({\rm id} \otimes \Gamma) (\ket{\Omega_{U}}\! \bra{\Omega_{U}})],
    \\
    \mathcal{P}^{(4)} (\Gamma)
    &\vcentcolon = \int dU \,
    I_4[({\rm id} \otimes \Gamma) (\ket{\Omega_{U}}\! \bra{\Omega_{U}})],
    \end{align}
\end{subequations}
where $\ket{\Omega_{U}} = (\eins \otimes U) \ket{\Omega}$ and the unitary $U$ is chosen according to the Haar distribution with $\int dU= 1$. Here, it is not necessary to apply a unitary on subsystem $A$ because of the LU invariance of $I_t$. We remark that, without loss of generality, $\ket{\Omega_{U}}$ in Eq.~(\ref{Eq:B^(t)}) can be replaced with $\ket{\Psi^-_{U}} = (\eins \otimes U) \ket{\Psi^-}$ where $\ket{\Psi^-} = (1/\sqrt{2}) (\ket{0} \otimes \ket{1} - \ket{1} \otimes \ket{0})$.

Now we can make the following:
\begin{observation}\label{ob:EB-channel}
    Any single-qubit entanglement-breaking channel obeys $\mathcal{P}^{(t)} (\Gamma) \leq 1$ for $t=2,4$. Conversely, if $\mathcal{P}^{(t)} (\Gamma) > 1$, then the channel is entanglement-preserving.
\end{observation}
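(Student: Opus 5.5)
The plan mirrors the proof of Observation~\ref{ob:NE-channel-criterion}: reduce the moment bound to the state-level separability criterion $I_t(\rho_{AB}) \leq 1$ for separable two-qubit states, applied pointwise inside the Haar integral.

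First, I would show that if $\Gamma$ is entanglement-breaking, then every output $({\rm id}\otimes\Gamma)(\ket{\Omega_U}\!\bra{\Omega_U})$ is separable. By definition, $({\rm id}\otimes\Gamma)(\rho)$ is separable for every input $\rho$, and in particular for $\rho=\ket{\Omega_U}\!\bra{\Omega_U}$. Equivalently, and to tie this to the Choi characterization quoted above from Ref.~\cite{Ruskai_2003}, one can write
\begin{equation*}
({\rm id}\otimes\Gamma)(\ket{\Omega_U}\!\bra{\Omega_U}) = ({\rm id}\otimes\Gamma\circ\mathcal{U})(\ket{\Omega}\!\bra{\Omega}),
\end{equation*}
where $\mathcal{U}(\cdot)=U\cdot U^\dagger$. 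A composition $\Gamma\circ\mathcal{U}$ of an entanglement-breaking channel with a unitary channel is again entanglement-breaking, since $\mathcal{U}$ maps states to states. Hence its Choi state is separable for every $U$.

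Second, I would invoke the criterion recalled in Section~II from Ref.~\cite{de_Vicente_2007}: any two-qubit separable state obeys $I_t\leq 1$ for $t=2,4$. That criterion rests on LU invariance, on the value $I_t(\ket{0}\!\bra{0}\otimes\ket{0}\!\bra{0})=1$, and on convexity of $I_t$. Applied pointwise, it gives $I_t[({\rm id}\otimes\Gamma)(\ket{\Omega_U}\!\bra{\Omega_U})]\leq 1$ for all $U$.

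Third, I would integrate this inequality over the Haar measure with $\int dU=1$, which yields $\mathcal{P}^{(t)}(\Gamma)\leq 1$. The converse statement is then just the contrapositive: if $\mathcal{P}^{(t)}(\Gamma)>1$, the channel cannot be entanglement-breaking, so it is entanglement-preserving by definition.

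There is no real obstacle here. The only step that needs care is the first one: checking that separability holds for every rotated maximally entangled input and not only for $\ket{\Omega}$. This is immediate from the definition of entanglement-breaking, or from closure of the entanglement-breaking class under pre-composition with unitaries.
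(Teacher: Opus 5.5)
Your proof is correct and follows essentially the same route as the paper. The paper just refers back to the argument for Observation~\ref{ob:NE-channel-criterion}: every output $({\rm id}\otimes\Gamma)(\ket{\Omega_U}\!\bra{\Omega_U})$ is separable, so it satisfies $I_t\leq 1$, and integrating against the normalized Haar measure gives $\mathcal{P}^{(t)}(\Gamma)\leq 1$, with the converse being the contrapositive.
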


\begin{proof}
    Similarly to the proof of Observation~\ref{ob:NE-channel-criterion}, we can complete the proof.
\end{proof}

We have several remarks. First, we can simplify Eq.~(\ref{Eq:B^(t)}) using the Kraus representation $\Gamma (\rho) = \sum_\alpha E_\alpha \rho E_\alpha^\dagger$ with $\sum_\alpha E_\alpha^\dagger E_\alpha = \eins$, where $E_\alpha$ is the Kraus operator acting on the system $B$. By denoting $\Phi_t (\mathcal{O})$ for an Hermitian operator $\mathcal{O}$ in $t$ qubits as
\begin{equation}
    \Phi_t (\mathcal{O}) \vcentcolon = \int dU \, U^{\otimes t} \mathcal{O} (U^\dagger)^{\otimes t},
\end{equation}
we obtain
\begin{widetext}
\begin{subequations}
    \begin{align}
        \mathcal{P}^{(2)} (\Gamma)
        &= \frac{1}{4} \sum_{i,j} \sum_{\alpha , \beta}
        \tr \left[  E_{\alpha \beta} \Phi_2(\sigma_i \otimes \sigma_i) E^\dagger_{\alpha \beta} (\sigma_j \otimes \sigma_j) \right], 
        \label{Eq:E^(2)}
        \\
        \mathcal{P}^{(4)} (\Gamma)
        &= \frac{1}{16} \sum_{i,j, k,l} \sum_{\alpha, \beta, \gamma, \delta}
        \tr \left[  E_{\alpha \beta \gamma \delta} \Phi_4(\sigma_i \otimes \sigma_k \otimes \sigma_k \otimes \sigma_i) E^\dagger_{\alpha \beta \gamma \delta} (\sigma_j \otimes \sigma_j \otimes \sigma_l \otimes \sigma_l) \right],  
        \label{Eq:E^(4)}
    \end{align}
\end{subequations}
\end{widetext}
where $E_{\alpha \beta} \vcentcolon = E_\alpha \otimes E_\beta$ and $E_{\alpha \beta \gamma \delta} \vcentcolon = E_\alpha \otimes E_\beta \otimes E_\gamma \otimes E_\delta$. The detailed calculation is provided in Appendix~\ref{ap:derivation_E^(2)andE^(4)}.

Second, for the identity channel $\Gamma = {\rm id}$, i.e., $\Gamma (\rho) = \rho$, we have that $\mathcal{P}^{(t)}({\rm id}) = I_t[\ket{\Psi^-}] = 3$. Also, for the unitary channel $\Gamma_U (\rho)= U \rho U^\dagger$, we have $\mathcal{P}^{(t)}(\Gamma_U) = 3$. For the completely depolarizing channel $\Gamma_{\rm CD} (\rho) = \eins/2$ for any $\rho$, we obtain $\mathcal{P}^{(t)}(\Gamma_{\rm CD}) = I_t[(1/4) \eins \otimes \eins] = 0$. 

Third, Observation~\ref{ob:EB-channel} is not a necessary condition for entanglement-preserving channels. Thus, there are some entanglement-preserving channels that satisfy the bound $\mathcal{P}^{(t)} (\Gamma) \leq 1$. Below, we will characterize such channels by developing additional criteria beyond Observation~\ref{ob:EB-channel}.

Our approach is to again consider a combination of $\mathcal{P}^{(2)}(\Gamma)$ and $\mathcal{P}^{(4)}(\Gamma)$ in a similar manner to the previous section. To proceed, we focus on the \textit{unital} (often called \textit{bistochastic}) channel that satisfies $\Gamma_{\rm U} (\eins) = \eins$, meaning that $\eins$ is a fixed point. The unital channel satisfies $\sum_\alpha E_\alpha E_\alpha^\dagger = \eins$ in addition to the trace-preserving condition $\sum_\alpha E_\alpha^\dagger E_\alpha = \eins$. It is known that any single-qubit unital channel can be represented as a convex combination of unitaries \cite{Bengtsson_Zyczkowski_2006}:
\begin{equation} \label{eq:single-qubit_unital}
    \Gamma_{\rm U} (\rho) = \sum_\alpha p_\alpha U_\alpha \rho U_\alpha^\dagger,
\end{equation}
where $U_\alpha$ is a unitary operator and $p_\alpha$ represents a probability distribution: $E_\alpha = \sqrt{p_\alpha} U_\alpha$. This suggests that the unital channel can describe a broad class of quantum channels.

\begin{figure}[t]
        \begin{overpic}[width=1.0\linewidth]{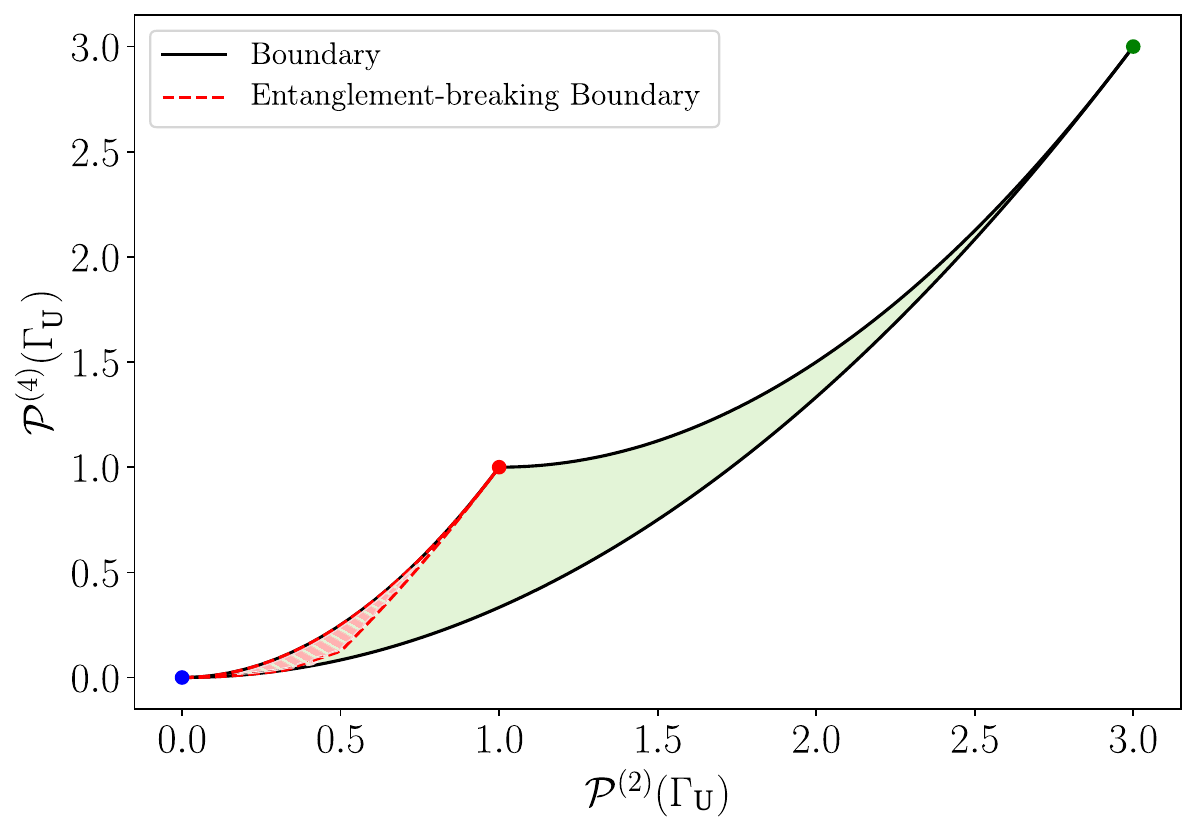}
            \put(2,72){{(a)}}  
        \end{overpic}
    
        \vspace{0.3cm}
    
        \begin{overpic}[width=1.02\linewidth]{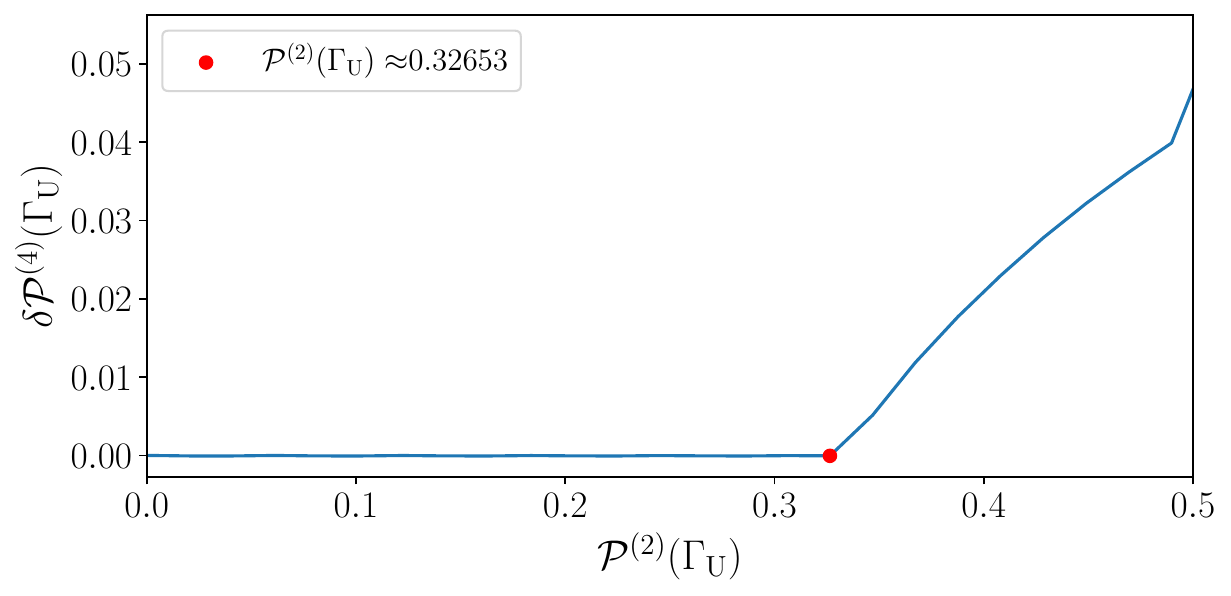}
            \put(2,52){{(b)}}
        \end{overpic}
        
        \caption{(a) Representation of entanglement-preserving and entanglement-breaking channels in the space spanned by the second and fourth moments, $\mathcal{P}^{(2)}(\Gamma_{\rm U})$ and $\mathcal{P}^{(4)}(\Gamma_{\rm U})$, for a single-qubit unital channel $\Gamma_{\rm U}$. All entanglement-breaking channels are contained in the red hashed region with $F_{\rm EB}[\mathcal{P}^{(2)}(\Gamma_{\rm U}), \mathcal{P}^{(4)}(\Gamma_{\rm U})] \geq 0$, presented in Observation~\ref{ob:B2-B4-EB-channel}. The green region corresponds to entanglement-preserving cases, which marks the improvement of the criterion in Observation~\ref{ob:EB-channel}. (b) The gap $\delta \mathcal{P}^{(4)}(\Gamma_{\rm U})$ between the dashed red and solid black lower curves in Fig.~\ref{fig:EB_plot}~(a) with increasing $\mathcal{P}^{(2)}(\Gamma_{\rm U})$. }
        \label{fig:difference_with_EB}
        \label{fig:EB_plot}
\end{figure}

We can summarize our results:
\begin{observation}   \label{ob:B2-B4-EB-channel}
    Consider the space spanned by the moments $\mathcal{P}^{(2)}(\Gamma_{\rm U})$ and $\mathcal{P}^{(4)}(\Gamma_{\rm U})$ for the unital channel $\Gamma_{\rm U}$ as in Eq.~(\ref{eq:single-qubit_unital}), shown in Fig.~\ref{fig:EB_plot}. For any single-qubit unital channel $\Gamma_{\rm U}$, if $\Gamma_{\rm U}$ is entanglement-breaking, then it obeys $F_{\rm EB}[\mathcal{P}^{(2)}(\Gamma_{\rm U}), \mathcal{P}^{(4)}(\Gamma_{\rm U})] \geq 0$, where $F_{\rm EB}$ defines the boundary of the entanglement-breaking region. A violation of this inequality, corresponding to a point lying outside the entanglement-breaking region, identifies a broader class of entanglement-preserving cases, compared to the criterion based solely on $\mathcal{P}^{(t)}(\Gamma_{\rm U}) \leq 1$ given in Observation~\ref{ob:EB-channel}.
\end{observation}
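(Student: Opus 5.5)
I will follow the same strategy as in the proof of Observation~\ref{ob:C2-C4-NE-channel-criterion}. The plan is to reduce the unital channel to a canonical three-parameter form, compute both moments in closed form, and then characterize the entanglement-breaking region by a separability condition in those parameters.

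\textbf{Step 1: reduce the channel to a Pauli channel.} Any single-qubit unital channel $\Gamma_{\rm U}$ has a Bloch representation $\vec{r}\mapsto M\vec{r}$ with a real $3\times 3$ matrix $M$. By a signed singular value decomposition, $M = O_1 D O_2$ with $O_1,O_2\in SO(3)$ and $D=\mathrm{diag}(\lambda_1,\lambda_2,\lambda_3)$. Hence $\Gamma_{\rm U} = \Gamma_{V}\circ\Gamma_{\rm Pauli}\circ\Gamma_{W}$ for unitaries $V,W$. Here $\Gamma_{\rm Pauli}(\rho)=\sum_{\mu=0}^3 q_\mu \sigma_\mu\rho\sigma_\mu$, and $(\lambda_1,\lambda_2,\lambda_3)$ lies in the Fujiwara--Algoet tetrahedron with vertices $(1,1,1)$, $(1,-1,-1)$, $(-1,1,-1)$, $(-1,-1,1)$. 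The moments are unchanged by these unitaries. The outer unitary $V$ drops out by the LU invariance of $I_t$, and the inner $W$ is absorbed by the left-invariance of the Haar measure in $\ket{\Omega_U}$. So $\mathcal{P}^{(t)}(\Gamma_{\rm U})=\mathcal{P}^{(t)}(\Gamma_{\rm Pauli})$, which parallels Eq.~(\ref{eq:invariance}).

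\textbf{Step 2: compute the moments.} For $({\rm id}\otimes\Gamma_{\rm Pauli})(\ket{\Psi^-_U}\!\bra{\Psi^-_U})$ the correlation matrix is $T = -\,D\,O_U$, up to the transposition conventions, where $O_U\in SO(3)$ is the rotation induced by $U$. Therefore
\[
I_2=\tr(TT^{T})=\sum_i\lambda_i^2, \qquad I_4=\tr[(TT^{T})^2]=\sum_i\lambda_i^4 .
\]
Both are independent of $U$, so the Haar average is trivial: $\mathcal{P}^{(2)}=\sum_i\lambda_i^2$ and $\mathcal{P}^{(4)}=\sum_i\lambda_i^4$. As a check, the table gives $1/2$ and $1/4$ for the B-gate case, consistent with $\lambda=(1/\sqrt2,0,0)$. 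I would also confirm these values against Eqs.~(\ref{Eq:E^(2)},\ref{Eq:E^(4)}).

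\textbf{Step 3: characterize the entanglement-breaking region.} By the Ruskai criterion quoted above, $\Gamma_{\rm Pauli}$ is entanglement-breaking if and only if its Choi state is separable. That Choi state is Bell-diagonal, so by the PPT criterion \cite{Peres_1996, Horodecki_1996} separability holds exactly when all Bell weights are at most $1/2$. Equivalently, $|\lambda_1|+|\lambda_2|+|\lambda_3|\le 1$, the octahedron inside the tetrahedron.

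\textbf{Step 4: map the regions to the moment plane.} The full accessible region is the image of the tetrahedron under $\lambda\mapsto(\sum\lambda_i^2,\sum\lambda_i^4)$, and the entanglement-breaking region is the image of the octahedron. For each fixed $x=\sum\lambda_i^2$, I would find the minimum and maximum of $\sum\lambda_i^4$ over the octahedron. The upper boundary of these images gives $F_{\rm EB}\ge 0$.
\begin{itemize}
\item The maximum is attained with a single nonzero entry, giving $y=x^2$ for $x\le 1$.
\item The minimum needs Lagrange multipliers. On the face $\sum|\lambda_i|=1$ the stationary points have at most two distinct absolute values. The candidates are the symmetric points $\lambda_i=\pm\sqrt{x/3}$, feasible only for $x\le 1/3$ and giving $y=x^2/3$, and beyond that points of the form $(a,a,1-2a)$ or $(a,1-a,0)$.
\end{itemize}
Finally, I would compare with the outer black boundary, obtained from the tetrahedron by the same procedure. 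That comparison shows a nonempty gap, which yields the green region of entanglement-preserving channels not detected by Observation~\ref{ob:EB-channel}.

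The main obstacle is the lower boundary in Step 4. The constraint is nonsmooth, so I would split by which face or edge of the octahedron is active and compare the branch values, possibly numerically as in Appendix~\ref{ap:optimization_details}.
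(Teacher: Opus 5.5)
Your proposal is correct and follows the paper's proof in all essentials: reduction by unitary similarity to a Pauli channel, $\mathcal{P}^{(t)}(\Gamma_{\rm U})=\sum_\alpha D_\alpha^t$, the constraint $\sum_\alpha|D_\alpha|\le 1$, and extremizing $\mathcal{P}^{(4)}$ at fixed $\mathcal{P}^{(2)}$; your correlation-matrix observation (that $I_t$ depends only on the singular values of $T$, so the Haar average is trivial) is a welcome shortcut to the Haar-integral computation of Appendix~\ref{ap:derivation_B2andB4_unital_form}. Your lower-boundary candidates are also right, and the $(a,1-a,0)$ branch is genuinely needed: on the face $\sum_\alpha|D_\alpha|=1$ one has $\mathcal{P}^{(4)}=\tfrac{1}{2}\bigl[(\mathcal{P}^{(2)})^2+2\mathcal{P}^{(2)}-1\bigr]+4|D_1D_2D_3|$, so for $\mathcal{P}^{(2)}\in(1/2,1)$ the minimum is attained with one $D_\alpha=0$, not on the family $\vec{d}=(1/2-2x,x,x,1/2)$ cited in the paper's remark after the proof.
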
  

\begin{proof}
    We begin by noting that for a quantum channel $\Gamma (\rho) = \sum_\alpha E_\alpha \rho E_\alpha^\dagger$, it holds that
    \begin{equation} \label{eq:invariance_B}
        \mathcal{P}^{(t)} (\Gamma)
        = \mathcal{P}^{(t)} (\Gamma^\prime),
    \end{equation}
    where $\Gamma^\prime (\rho) = \sum_\alpha E_\alpha^{\prime} \rho (E_\alpha^{\prime})^\dagger$ with $E_\alpha^{\prime} = V E_\alpha W$ for unitaries $V,W$. This can be shown similarly with Eq.~(\ref{eq:invariance}), by recalling the LU invariance of $I_t$ and both left- and right-invariance of the Haar measure. Note that two channels $\Gamma$ and $\Gamma^\prime$ are called \textit{unitarily similar} (or \textit{unitarily equivalent}) if they satisfy the relation $\Gamma^\prime (\rho) = V \Gamma (W \rho W^\dagger) V^\dagger$. Thus, the moments $\mathcal{P}^{(t)} (\Gamma)$ can characterize a unitarily similar class of a quantum channel $\Gamma$.

    Next, we note that any single-qubit unital channel in a unitarily similar class can be written as 
    \begin{equation}\label{eq:unitalform_Pauli}
        \Gamma_{\rm U} (\rho) = 
        \sum_{\alpha = 0}^3 d_\alpha \sigma_\alpha \rho \sigma_\alpha,
        \quad
        \sum_{\alpha = 0}^3 d_\alpha = 1,
    \end{equation}
    where $\sigma_\alpha$ denotes the $\alpha$-th Pauli matrix ($\sigma_0 = \eins_2$) and $d_\alpha \in [0,1]$ \cite{Bengtsson_Zyczkowski_2006}. Thus, without loss of generality, we can take the form in Eq.~(\ref{eq:unitalform_Pauli}) to simplify the moments $\mathcal{P}^{(t)} (\Gamma_{\rm U})$. Since the Kraus operator is in the form of $E_\alpha = \sqrt{d_\alpha} \sigma_\alpha$, inserting this into Eqs.~(\ref{Eq:E^(2)},\ref{Eq:E^(4)}) leads to
    \begin{subequations}
        \begin{align}
            \mathcal{P}^{(2)} (\Gamma_{\rm U})
            &= \sum_{\alpha=1}^3  D_\alpha^2, 
            \label{eq:B2_unital_form}
            \\
            \mathcal{P}^{(4)} (\Gamma_{\rm U})
            &= \sum_{\alpha=1}^3 D_\alpha^4,
            \label{eq:B4_unital_form}
        \end{align}
    \end{subequations}
    where $D_\alpha = 2( d_\alpha +  d_0) - 1$. The detailed calculation is provided in Appendix~\ref{ap:derivation_B2andB4_unital_form}.

    Consequently, we consider the space spanned by the moments $\mathcal{P}^{(2)}(\Gamma_{\rm U})$ and $\mathcal{P}^{(4)}(\Gamma_{\rm U})$ for any single-qubit unital channel $\Gamma_{\rm U}$. We can obtain this region by maximizing/minimizing $\mathcal{P}^{(4)}(\Gamma_{\rm U})$ over $d_0, d_1, d_2, d_3$ for a fixed value $\mathcal{P}^{(2)}(\Gamma_{\rm U})$, subject to $d_\alpha \in [0,1]$ and $\sum_{\alpha = 0}^3 d_\alpha = 1$.

    Let us impose further constraints on the above optimization to characterize the border of the entanglement-breaking set. We can show that such constraints are
    \begin{equation}
        \sum_{\alpha=1}^3 |D_\alpha| \leq 1,
        \label{eq:EB_constraints}
    \end{equation}
    which is derived below. Hence, by adding the constraint in Eq.~(\ref{eq:EB_constraints}) to the above optimization, we can identify the entanglement-breaking boundary.

    Finally, let us derive Eq.~(\ref{eq:EB_constraints}). First we recall the Bloch representation of the single-qubit state $\rho$: $\rho = (1/2) ( \eins + \sum_{i=1}^3 r_i \sigma_i )$. Inserting this to $\Gamma_{\rm U} (\rho)$ in Eq.~(\ref{eq:unitalform_Pauli}) and using the formula $\sigma_\alpha \sigma_i \sigma_\alpha = (1-\delta_{\alpha 0})(2\delta_{\alpha i}\sigma_\alpha - \sigma_i) + \delta_{\alpha 0} \sigma_\alpha$ for $\alpha=0,1,2,3$ and $i=1,2,3$, we obtain that $\Gamma_{\rm U}(\rho) = (1/2) ( \eins + \sum_{\alpha=1}^3 D_\alpha r_\alpha \sigma_i)$, where $D_\alpha = 2( d_\alpha +  d_0) - 1$. This indicates that the unital channel $\Gamma_{\rm U}$ affects the Bloch vector components $r_\alpha$ by simply scaling them as $D_\alpha r_\alpha$. It is known that any single-qubit unital channel is entanglement-breaking if and only if $\sum_{\alpha=1}^3 \left| D_\alpha \right| \leq 1$ \cite{Ruskai_2003}. Thus, we can arrive at Eq.~(\ref{eq:EB_constraints}).
\end{proof}

In Fig.~\ref{fig:EB_plot}~(a), the solid black curves represent the whole bounds for the moments of $\Gamma_{\rm U}$, while the dashed red curves indicate the entanglement-breaking region. Here, the dashed red upper bound coincides exactly with the solid black upper bound, which corresponds to the case where the parameters $\vec{d} = (d_0, d_1, d_2, d_3)$ in Eq.~\eqref{eq:unitalform_Pauli} are given by $(1/2-x, 1/2-x, x, x)$, $(1/2-x, x, 1/2-x, x)$, and $(1/2-x, x, x, 1/2-x)$ for $x \in [0,1/4]$. In particular, the red point at $\mathcal{P}^{(t)}(\Gamma_{\rm U}) = 1$ for $t=2,4$ corresponds to the case with $x=0$ in the above, the green point at $\mathcal{P}^{(t)}(\Gamma_{\rm U}) = 3$ to the identity and unitary channels with $\vec{d} = (1,0,0,0)$, and the blue point at $\mathcal{P}^{(t)}(\Gamma_{\rm U}) = 0$ to the completely depolarizing channel $\vec{d} = (1/4,1/4,1/4,1/4)$.

In contrast, the dashed red lower and solid black lower bounds do not match, whose deviation is shown in Fig.~\ref{fig:EB_plot}~(b). There, the gap $\delta \mathcal{P}^{(4)}(\Lambda_p)$ among them becomes noticeable around $\mathcal{P}^{(2)}(\Gamma_{\rm U}) \approx 0.32653$ (red point), where the optimization was performed with tight convergence tolerances of $10^{-12}$ (details are in Appendix~\ref{ap:optimization_details}). More analytically, the solid black lower curve in Fig.~\ref{fig:EB_plot}~(a) corresponds to $\vec{d} = (1-3d,d,d,d)$ for $d \in [0,1/4]$, while the part of the dashed red curve where it no longer overlaps with the solid black lower curve corresponds to $\vec{d}=(1/2-2x,x,x,1/2)$ for $x \in [0,1/4]$. This leads to the gap starting to appear at exactly $\mathcal{P}^{(2)} (\Gamma_{\rm U}) = 1/3$ for $x = 1/6$.

\section{Relation between two-qubit unitary and single-qubit noisy channel} \label{sec:relation}
In this section, we consider a scenario in which a single-qubit state $\zeta$ interacts with a maximally mixed environment state $\eins/2$ through a two-qubit unitary $U_{AB}$. The induced dynamics can be described by the single-qubit channel
\begin{equation} \label{eq:noisy-unitary}
    \Gamma_{U_{AB}}(\zeta) = \tr_A \left[U_{AB} \left( \frac{\eins}{2} \otimes \zeta \right) U_{AB}^\dagger \right].
\end{equation}
Note that this channel is unital, i.e., $\Gamma_{U_{AB}} (\eins) = \eins$. In the following, we characterize a two-qubit unitary $U_{AB}$ in the context of both preserving and creating entanglement via the moments $\mathcal{P}^{(t)}(\Gamma_{U_{AB}})$ and $\mathcal{C}^{(t)}(U_{AB})$ for $t=2,4$.

For example, in the case with $U_{AB} = \eins_A \otimes \eins_B$ and $U_{AB} = \swap_{AB}$ for $\swap_{AB}$ being the SWAP operator, we notice that $\mathcal{P}^{(t)} (\Gamma_{\eins_A \otimes \eins_B}) = 3$ and $\mathcal{P}^{(t)} (\Gamma_{\swap_{AB}}) = 0$, because $\Gamma_{\eins_A \otimes \eins_B}$ becomes the identity channel and $\Gamma_{\swap_{AB}}$ becomes the completely depolarizing channel. On the other hand, both unitaries are non-entangling and have the same moments $\mathcal{C}^{(t)}(\eins_A \otimes \eins_B) = \mathcal{C}^{(t)}(\swap_{AB}) = 1$. This suggests that $\mathcal{P}^{(t)} (\Gamma_{U_{AB}})$ can distinguish between unitaries that are indistinguishable using $\mathcal{C}^{(t)}(U_{AB})$.

Now, we can present our results as follows:
\begin{observation}   \label{ob:B-C-relation}
    Consider the space spanned by the moments $\mathcal{P}^{(t)}(\Gamma_{U_{AB}})$ and $\mathcal{C}^{(t)}(U_{AB})$ for an arbitrarily two-qubit unitary $U_{AB}$ and the induced single-qubit unital channel $\Gamma_{U_{AB}}$ as in Eq.~(\ref{eq:noisy-unitary}), shown in Fig.~\ref{fig:Pt_Ct}~(a) for $t=2$ and Fig.~\ref{fig:Pt_Ct}~(b) for $t=4$. For any two-qubit unitary $U_{AB}$, if $\Gamma_{U_{AB}}$ is entanglement breaking, then it obeys $F_{\rm EB}^{(t)}[\mathcal{C}^{(t)}({U_{AB}}), \mathcal{P}^{(t)}(\Gamma_{U_{AB}})] \geq 0$ for $t=2,4$, where $F_{\rm EB}^{(t)}$ defines the boundaries of the entanglement-breaking region. A violation of these inequalities, corresponding to a point lying outside the entanglement-breaking regions, identifies a broader class of entanglement-preserving cases, compared to the criteria based solely $\mathcal{P}^{(t)}(\Gamma_{\rm U}) \leq 1$ given in Observation~\ref{ob:EB-channel}.
\end{observation}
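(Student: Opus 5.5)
The plan mirrors the proofs of Observations~\ref{ob:C2-C4-NE-channel-criterion} and~\ref{ob:B2-B4-EB-channel}: reduce both moments to functions of the canonical parameters $\vec{c}=(c_1,c_2,c_3)$, impose the entanglement-breaking constraint in the same parameters, and find the admissible region by constrained optimization.

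\textbf{Step 1: reduce to canonical form.} We first show that $\mathcal{P}^{(t)}(\Gamma_{U_{AB}})$ depends only on the canonical parameters. Write $U_{AB}=V_A\otimes V_B U'_{AB} W_A\otimes W_B$ with $U'_{AB}$ as in Eq.~(\ref{eq:U_{AB}prime}). The local factors act as follows:
\begin{itemize}
\item $V_A$ disappears under $\tr_A$.
\item $W_A$ leaves $\eins/2$ invariant.
\item $V_B$ and $W_B$ turn $\Gamma_{U_{AB}}$ into a unitarily similar channel $V_B\Gamma_{U'_{AB}}(W_B\cdot W_B^\dagger)V_B^\dagger$.
\end{itemize}
By Eq.~(\ref{eq:invariance_B}), $\mathcal{P}^{(t)}(\Gamma_{U_{AB}})=\mathcal{P}^{(t)}(\Gamma_{U'_{AB}})$. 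Together with Eq.~(\ref{eq:invariance}), both moments then depend only on $\vec{c}$.

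\textbf{Step 2: compute $\Gamma_{U'_{AB}}$ in Pauli form.} Decompose $\eins/2=\frac12\sum_{a}\ket{a}\!\bra{a}$. Using $U'_{AB}=\sum_{\alpha=0}^3 u_\alpha\,\sigma_\alpha\otimes\sigma_\alpha$ (the Bell-diagonal form of $e^{i\sum_k c_k\sigma_k\otimes\sigma_k}$), we get
\[
\Gamma_{U'_{AB}}(\zeta)=\sum_{\alpha,\beta}u_\alpha u_\beta^*\,\tfrac12\tr(\sigma_\beta\sigma_\alpha)\,\sigma_\alpha\zeta\sigma_\beta=\sum_\alpha |u_\alpha|^2\sigma_\alpha\zeta\sigma_\alpha .
\]
This is exactly the form of Eq.~(\ref{eq:unitalform_Pauli}) with $d_\alpha=|u_\alpha|^2$, where the $u_\alpha$ are the products of $\cos c_k$ and $i\sin c_k$ from expanding the exponential. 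Hence $D_\alpha$ is an explicit trigonometric function of $\vec{c}$; we expect $D_1=\cos 2c_2\cos 2c_3$ and its cyclic permutations. Eqs.~(\ref{eq:B2_unital_form}) and~(\ref{eq:B4_unital_form}) then give $\mathcal{P}^{(t)}(\Gamma_{U_{AB}})$ in closed form. The table values are a sanity check: CNOT gives $D=(1,0,0)$, hence $\mathcal{P}^{(t)}=1$, and SWAP gives $0$.

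\textbf{Step 3: build the two regions.} Pair the $\mathcal{P}^{(t)}$ expressions with Eqs.~(\ref{Eq:EC2}) and~(\ref{Eq:EC4}), so that each $t$ gives a map $\vec{c}\mapsto(\mathcal{C}^{(t)},\mathcal{P}^{(t)})$ on $[0,\pi/4]^3$.
\begin{itemize}
\item The full accessible region comes from maximizing and minimizing $\mathcal{P}^{(t)}$ at fixed $\mathcal{C}^{(t)}$ over $\vec{c}$.
\item The entanglement-breaking subregion comes from adding the necessary and sufficient constraint of Eq.~(\ref{eq:EB_constraints}), $\sum_\alpha|D_\alpha(\vec{c})|\le 1$, which holds because $\Gamma_{U_{AB}}$ is unital and of the form in Eq.~(\ref{eq:unitalform_Pauli}).
\end{itemize}
The boundary of this subregion defines $F^{(t)}_{\rm EB}\ge0$. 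Any unitary whose point lies outside it must violate the constraint, so its channel is entanglement-preserving. To show the criterion improves on Observation~\ref{ob:EB-channel}, it suffices to exhibit one $\vec{c}$ with $\mathcal{P}^{(t)}\le1$ lying outside the entanglement-breaking region. Candidates are near CNOT or B along $(\pi/4,c_2,0)$: B has $D=(0,0,1)\cdot$? so this needs checking, or small perturbations of them.

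\textbf{Main obstacle.} The hard step is the constrained optimization defining $F^{(t)}_{\rm EB}$. The constraint involves absolute values and the objectives are nonconvex trigonometric functions, so the boundary will likely be found numerically, as in Appendix~\ref{ap:optimization_details}. The first approach would be to split by sign patterns of the $D_\alpha$ and check edges of the Weyl chamber, such as $c_3=0$ and $c_1=c_2$, analytically for closed-form boundary pieces.
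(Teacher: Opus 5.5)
Your reduction to $\vec c$, the Pauli form $d_\alpha=|u_\alpha|^2$ of $\Gamma_{U'_{AB}}$, and the envelope optimization with the extra constraint $\sum_\alpha|D_\alpha|\le 1$ are the same as the paper's proof. Your derivation of $d_\alpha$ fills in what the paper only states. Your labelling of the $D_\alpha$ is a permutation of the paper's, which is harmless because only $\sum_\alpha D_\alpha^t$ and $\sum_\alpha|D_\alpha|$ enter.

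The step that fails is your plan for the ``broader class'' clause. Every unitary with $c_1=\pi/4$ has $\cos 2c_1=0$. By your own formulas this gives $D=(\cos 2c_2\cos 2c_3,0,0)$ and hence $\sum_\alpha|D_\alpha|\le 1$. So the whole face $c_1=\pi/4$ induces entanglement-breaking channels and can never be a witness. This includes the line $(\pi/4,c_2,0)$ through CNOT, B and $i$SWAP. (B has $D=(1/\sqrt2,0,0)$, not $(0,0,1)$, which matches $\mathcal{P}^{(2)}=1/2$ in Table~\ref{tab:gates}.) Leaving the face only through $c_1$ does not help either. For example, $\vec c=(\pi/4-\epsilon,0,0)$ gives $\sum_\alpha|D_\alpha|=1+2\sin(2\epsilon)>1$, but also $\mathcal{P}^{(t)}=1+2\sin^t(2\epsilon)>1$, so Observation~\ref{ob:EB-channel} already detects it.

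A witness that works is $\sqrt{\mathrm{SWAP}}$, with $\vec c=(\pi/8,\pi/8,\pi/8)$. It has $D=(1/2,1/2,1/2)$, so $\sum_\alpha|D_\alpha|=3/2>1$, while $\mathcal{P}^{(2)}=3/4$ and $\mathcal{P}^{(4)}=3/16$. You must still show that its point lies outside the image of the entanglement-breaking set, and for $t=2$ this can be done exactly:
\begin{itemize}
\item Let $a_k=\cos^2 2c_k$, and let $e_1,e_2$ be the elementary symmetric polynomials of $(a_1,a_2,a_3)$. Then $\mathcal{P}^{(2)}=e_2$.
\item From Eq.~(\ref{Eq:EC2}), using $\cos 4c^-_{12}+\cos 4c^+_{12}=2\cos4c_1\cos4c_2$ and $\cos 4c_k=2a_k-1$, one gets $\mathcal{C}^{(2)}=1+\tfrac89(e_1-e_2)$.
\item The point $(3/4,15/9)$ forces $e_1=3/2$ and $e_2=3/4$. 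Hence $\sum_k a_k^2=e_1^2-2e_2=3/4=e_1^2/3$, which is the equality case of Cauchy--Schwarz.
\item So $a=(1/2,1/2,1/2)$, and every unitary mapped to this point has $\sum_\alpha|D_\alpha|=\sum_{i<j}\sqrt{a_ia_j}=3/2$, i.e.\ it is entanglement-preserving.
\end{itemize}
For $t=4$ the corresponding check needs the computed boundary. The paper itself supports this clause only through the regions in Fig.~\ref{fig:Pt_Ct} and its remark that $\sqrt{\mathrm{SWAP}}$ is the one entanglement-preserving gate among those listed.
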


\begin{figure}[t]
        \begin{overpic}[width=1.0\linewidth]{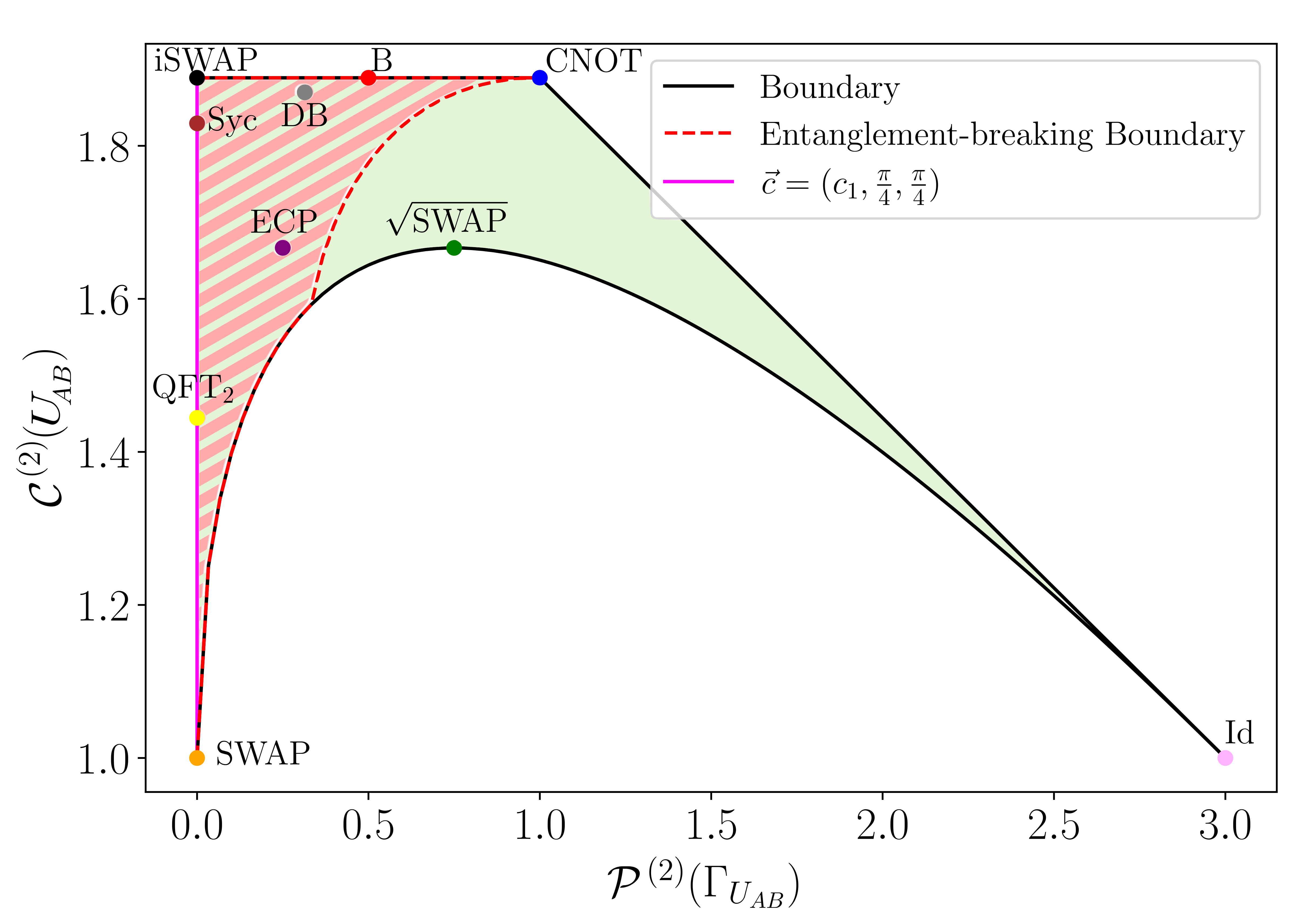}
        \put(2,72){{(a)}} 
        \end{overpic}
    
        \vspace{0.3cm}
    
        \begin{overpic}[width=1.0\linewidth]{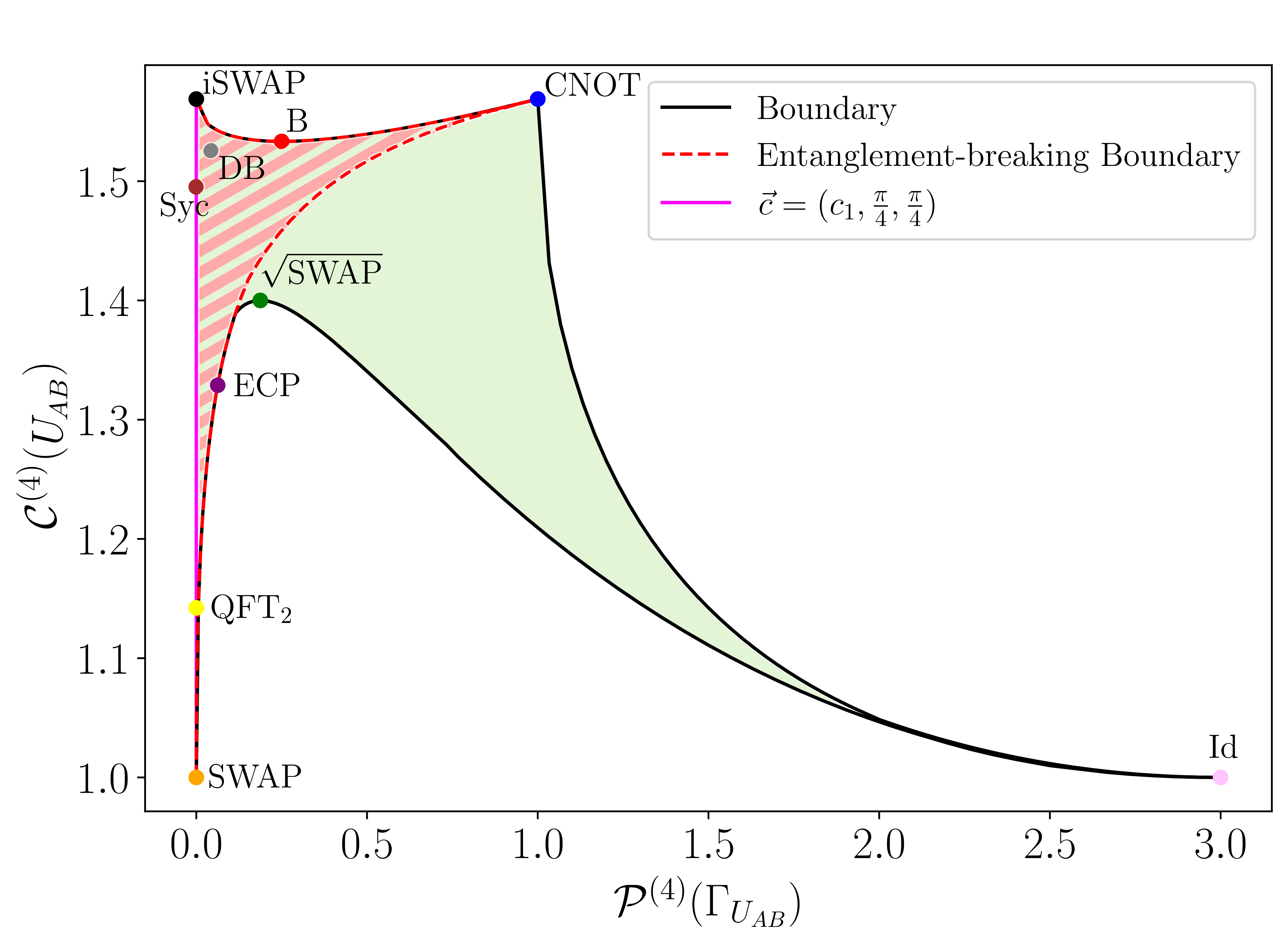}
        \put(2,72){{(b)}}
        \end{overpic}

        \caption{Representation of entanglement-preserving and entanglement-breaking channels in the space spanned by the second moments, $\mathcal{P}^{(2)}(\Gamma_{U_{AB}})$ and $\mathcal{C}^{(2)}({U_{AB}})$ in (a), or the fourth moments, $\mathcal{P}^{(4)}(\Gamma_{U_{AB}})$ and $\mathcal{C}^{(4)}({U_{AB}})$ in (b), for a two-qubit unitary $U_{AB}$ and the induced single-qubit channel $\Gamma_{U_{AB}}$. All entanglement-breaking channels are contained in the hashed red area with $F_{\rm EB}^{(t)}[\mathcal{P}^{(t)}(\Gamma_{U_{AB}}), \mathcal{C}^{(t)}(U_{AB})] \geq 0$, presented in Observation~\ref{ob:B-C-relation}. The green region corresponds to entanglement-preserving cases, which again marks the improvement of the criterion in Observation~\ref{ob:EB-channel}. Several well-known unitaries from Table~\ref{tab:gates} are located.}
        \label{fig:Pt_Ct}
\end{figure}

\begin{proof}
    We begin by recalling that for any two-qubit unitary $U_{AB}$, the relation $\mathcal{C}^{(t)}(U_{AB}) = \mathcal{C}^{(t)}(U_{AB}^{\prime})$ holds, where $U_{AB}^\prime = e^{i \sum_{i=1}^3 c_i \sigma_i \otimes \sigma_i}$ for $c_i \in [0, {\pi}/{4}]$ in Eq.~(\ref{eq:U_{AB}prime}) and $U_{AB}^\prime = V_A \otimes V_B U_{AB} W_A \otimes W_B$ for some unitaries $V_A, V_B, W_A, W_B$. Similarly, we can also show that
    \begin{equation}
        \mathcal{P}^{(t)}(\Gamma_{U_{AB}}) = \mathcal{P}^{(t)}(\Gamma_{U_{AB}^{\prime}})
    \end{equation}
    holds for any unitary $U_{AB}$. This follows from the fact $\Gamma_{U_{AB}^{\prime}}(\rho)= V \Gamma_{U_{AB}} (W \rho W^\dagger) V^\dagger$ for unitaries $V,W$, as stated after Eq.~(\ref{eq:invariance_B}). In fact, using the relation $\tr_A[V_A \otimes V_B X_{AB} V_A^\dagger \otimes V_B^\dagger] = V_B \tr_A[ X_{AB}] V_B^\dagger$ for an operator $X_{AB}$, we have
    \begin{subequations}
    \begin{align}
        \!\!
        \Gamma_{U_{AB}^{\prime}}(\rho)
        &\!=\! \tr_A \! \left[U_{AB}^{\prime}
        \left( \frac{\eins}{2} \otimes \rho \right)
        (U_{AB}^{\prime})^\dagger\right]
        \\
        &\!=\! V_B \tr_A \! \left[U_{AB}
        \left( \frac{\eins}{2} \otimes W_B \rho W_B^\dagger \right)
        U_{AB}^\dagger\right] V_B^\dagger
        \\
        &\!=\! V_B \Gamma_{U_{AB}}(W_B \rho W_B^\dagger) V_B^\dagger.
    \end{align}
    \end{subequations}

    Thus, both $\mathcal{C}^{(t)}(U_{AB})$ and $\mathcal{P}^{(t)}(\Gamma_{U_{AB}})$ can be written in terms of the canonical parameters $\vec{c}=(c_1, c_2, c_3)$. The explicit expressions for $\mathcal{C}^{(t)}(U_{AB})$ are given in Eqs.~(\ref{Eq:EC2},\ref{Eq:EC4}), while one can immediately find that $\mathcal{P}^{(t)}(\Gamma_{U_{AB}}) = \sum_{\alpha=1}^3 D_\alpha^t$ as in Eqs.~(\ref{eq:B2_unital_form},\ref{eq:B4_unital_form}), since $\Gamma_{U_{AB}}$ is unital. Here, $D_\alpha$ are expressed as $D_1 = \cos \left(2 c_1\right) \cos \left(2 c_2\right)$, $D_2 = \cos \left(2 c_1\right) \cos \left(2 c_3\right)$, and $D_3 = \cos \left(2 c_2\right) \cos \left(2 c_3\right)$. Consequently, the whole region spanned by the moments $\mathcal{P}^{(t)}(\Gamma_{U_{AB}})$ and $\mathcal{C}^{(t)}(U_{AB})$ shown in Fig.~\ref{fig:Pt_Ct} can be obtained by maximizing/minimizing $\mathcal{C}^{(t)}(U_{AB})$ over $c_1, c_2, c_3 \in [0, {\pi}/{4}]$ for a fixed value $\mathcal{P}^{(t)}(\Gamma_{U_{AB}})$. The entanglement-breaking boundary is also obtained by imposing the additional constraint $\sum_{\alpha=1}^3 |D_\alpha| \leq 1$ from Eq.~(\ref{eq:EB_constraints}), 
\end{proof}

In Fig.~\ref{fig:Pt_Ct}, the solid black curves represent the whole bounds of the moments of ${U_{AB}}$ and $\Gamma_{U_{AB}}$, while the dashed red curves enclose the entanglement-breaking regions. All channels enclosed by the solid black curves, except local unitaries and the SWAP operator, are entanglement-creating, while the corresponding induced $\Gamma_{U_{AB}}$ are either entanglement-preserving or entanglement-breaking. Several well-known two-qubit unitary gates in Table~\ref{tab:gates} are also shown, positioned according to their canonical parameters $\vec{c} = (c_1, c_2, c_3)$. Among these, only the $\sqrt{\mathrm{SWAP}}$ gate induces an entanglement-preserving channel, while the others induce entanglement-breaking channels.

The pair of moments $\mathcal{P}^{(t)}(\Gamma_{U_{AB}})$ and $\mathcal{C}^{(t)}(U_{AB})$ (for $t=2,4$) together provide a finer characterization of two-qubit unitaries than either moment alone. For example, the $i$SWAP and SWAP gates are distinguishable by $\mathcal{C}^{(t)}(U_{AB})$ but not by $\mathcal{P}^{(t)}(\Gamma_{U_{AB}})$, while the two special perfect entangler, the $i$SWAP and CNOT gates, are distinguishable by $\mathcal{P}^{(t)}(\Gamma_{U_{AB}})$ but not by $\mathcal{C}^{(t)}(U_{AB})$. In fact, the $i$SWAP and SWAP gates are connected by the vertical pink line at $\mathcal{P}^{(t)}(\Gamma_{U_{AB}}) = 0$, corresponding to the family of unitaries with $\vec{c}=(c_1, \pi/4, \pi/4)$ for $c_1 \in [0, \pi/4]$, known as the \emph{parametric SWAP} or $p$SWAP \cite{smith2017}. The induced channel $\Gamma_{p \rm SWAP}$ forms a boundary of the entanglement-breaking region. Also, the horizontal line connecting the $i$SWAP and CNOT gates in Fig.~\ref{fig:Pt_Ct}~(a) is generated by $\vec{c} = (\pi/4,0,c_3)$ for $c_3 \in [0, \pi/4]$. Along this family, $\mathcal{C}^{(2)}(U_{AB})$ is independent of $c_3$ following directly from Eq.~\eqref{Eq:EC2} and thus it is a constant value, while $\mathcal{C}^{(4)}(U_{AB})$ depends on $c_3$, giving rise to the curved trajectory shown in Fig.~\ref{fig:Pt_Ct}~(b).

Finally, we discuss the advantage of using the fourth moments in Fig.~\ref{fig:Pt_Ct}~(b) compared with the second moments in Fig.~\ref{fig:Pt_Ct}~(a). In fact, there are some two-qubit unitaries that can be distinguished by $\mathcal{P}^{(4)}(\Gamma_{U_{AB}})$ but not by both $\mathcal{C}^{(2)}(U_{AB})$ and $\mathcal{P}^{(2)}(\Gamma_{U_{AB}})$. Examples are given by $V_{AB}$ with $\vec{c}_V=(\pi/4, \pi/12,\pi/12)$ and $W_{AB}$ with $\vec{c}_W = (0,\pi/6,\pi/6)$. Although both have the same values
\begin{subequations}
\begin{align}
    \mathcal{C}^{(2)}(V_{AB}) &= \mathcal{C}^{(2)}(W_{AB}) = \frac{11}{6},
    \\
    \mathcal{C}^{(4)}(V_{AB}) &= \mathcal{C}^{(4)}(W_{AB}) = \frac{297}{200},
    \\
    \mathcal{P}^{(2)}(\Gamma_{V_{AB}}) &= \mathcal{P}^{(2)}(\Gamma_{W_{AB}}) = \frac{9}{16},
\end{align}
\end{subequations}
they have different values
\begin{equation}
    \mathcal{P}^{(4)}(\Gamma_{V_{AB}}) = \frac{81}{256},
    \quad
    \mathcal{P}^{(4)}(\Gamma_{W_{AB}}) = \frac{33}{256}.
\end{equation}
That is, they appear as a single point in Fig.~\ref{fig:Pt_Ct}~(a), but are represented as two separate points in Fig.~\ref{fig:Pt_Ct}~(b).

\section{Conclusion}
We have developed methods for characterizing two-qubit channels with respect to their ability to create, preserve, and break entanglement within moment-based frameworks. We first showed that the fourth moment can identify entanglement-creating channels that can be classified as non-entangling based solely on the second moment. Similarly, we found that the fourth moment can detect entanglement-preserving channels that can be recognized as entanglement-breaking by the second moment alone. Finally, we demonstrated that combining two distinct types of moments enables a more refined characterization of two-qubit unitaries than either type can provide on its own.

Several directions for further research remain. First, it would be interesting to extend our approach to multipartite particles and higher-dimensional systems by employing the sector lengths \cite{Wyderka_2020,Eltschka2020} or the trace polynomials of the generalized Bloch correlation matrix \cite{Imai_2021, Wyderka_2023}. Second, our results motivate further investigation into other resource-creating and resource-breaking properties beyond entanglement \cite{Chitambar_2019}. Finally, to enhance the practical applicability of our framework in experimental settings, it would be worthwhile to examine the statistical treatment of finite data.

\textit{Note added:} While finishing this manuscript, we became aware of a related work \cite{rudzinski2026}.

\section{Acknowledgments}
We thank
Daniel Heineken,
Martin Kliesch,
Nikolai Miklin, and
Stefan Nimmrichter
for discussions.
S.S. is funded by Fujitsu Germany GmbH and Dataport and the Hamburg Quantum Computing project, which is co-financed by the ERDF of the European Union and the Fonds of the Hamburg Ministry of Science, Research, Equalities and Districts (BWFGB).
S.I. acknowledges support from JST ASPIRE (JPMJAP2339).

\vspace{0.25cm}
{\footnotesize
\hypertarget{email1}{}\noindent\textsuperscript{*} \href{mailto:salwa.shaglel@tuhh.de}{salwa.shaglel@tuhh.de} \\
\hypertarget{email2}{}\textsuperscript{\textdagger} \href{mailto:satoyaimai@yahoo.co.jp}{satoyaimai@yahoo.co.jp}
}

\clearpage
\newpage
\appendix
\onecolumngrid

\section{Derivation of Eqs.~(\ref{Eq:C^(2)}, \ref{Eq:C^(4)})}\label{ap:derivation_details_ENT_CRT}
Here we derive Eqs.~(\ref{Eq:C^(2)}, \ref{Eq:C^(4)}) presented in the main text by showing that, for any two-qubit channel $\Lambda(\rho_{AB})=\sum_{\alpha} K_\alpha \rho_{AB} K^\dagger_\alpha$, the moments $\mathcal{C}^{(2)} (\Lambda)$ and $\mathcal{C}^{(4)} (\Lambda)$ are given by
\begin{subequations}
    \begin{align} \label{eq:ap:C2Lambda}
    \mathcal{C}^{(2)} (\Lambda)
    &= \sum_{i,j} \sum_{\alpha, \beta}
    \tr \! \left[
    K_{\alpha \beta}
    P_A^{(2)} \otimes P_B^{(2)}
    K_{\alpha \beta}^\dagger
    (\sigma_i^A \! \otimes \! \sigma_j^B \! \otimes \! 
    \sigma_i^A \! \otimes \! \sigma_j^B)
    \right],    
    \\
    \label{eq:ap:C4Lambda}
    \mathcal{C}^{(4)} (\Lambda)
    &= \sum_{\substack{i,j \\ k,l}}
    \sum_{\substack{\alpha, \beta \\ \gamma, \delta}}
    \tr \! \left[
    K_{\alpha \beta \gamma \delta}
    P_A^{(4)} \otimes P_B^{(4)}
    K_{\alpha \beta \gamma \delta}^\dagger
    (\sigma_i^A \! \otimes \! \sigma_j^B  \! \otimes \!
    \sigma_k^A \! \otimes \! \sigma_j^B \! \otimes \!
    \sigma_k^A \! \otimes \!  \sigma_l^B \! \otimes \!
    \sigma_i^A \! \otimes \! \sigma_l^B )
    \right],
    \end{align}
\end{subequations}
where $K_{\alpha \beta} \vcentcolon = K_\alpha \otimes K_\beta$, $K_{\alpha \beta \gamma \delta} \vcentcolon = K_\alpha \otimes K_\beta \otimes K_\gamma \otimes K_\delta$, and $P_X^{(t)} = [1/(t+1)] \Tilde{P}_X^{(t)}$ for $X=A,B$ and $\Tilde{P}_X^{(t)}$ representing the projector onto the symmetric subspace of $t$ qubits.

\begin{proof}
We begin by recalling that $\mathcal{C}^{(t)} (\Lambda)$ is defined as
\begin{equation}
    \mathcal{C}^{(t)} (\Lambda)
    = \int d\psi_A \int d\psi_B \,
    I_t[\Lambda (\ket{\psi_A} \otimes \ket{\psi_B})], \label{eq:C^(t)(Lambda)}
\end{equation}
where $I_2 (\rho_{AB}) = \sum_{i,j=1,2,3} T_{i,j}^2$ and $I_4 (\rho_{AB}) = \sum_{i,j,k,l=1,2,3} T_{i,j} T_{k,j} T_{k,l} T_{i,l}$ with $T_{i,j} = \tr(\rho_{AB} \sigma_i^A \otimes \sigma_j^B)$. Now we consider the case in Eq.~(\ref{eq:setting_rho_AB}) where $\rho_{AB} = \Lambda (\ket{\psi_{AB}})$ and $\ket{\psi_{AB}} = \ket{\psi_A} \otimes \ket{\psi_B}$. To proceed, let us rewrite the form of $I_t[\Lambda (\ket{\psi_A} \otimes \ket{\psi_B})]$. Using the fact that $\tr(X) \tr(Y) = \tr(X \otimes Y)$ for operators $X,Y$, we find
\begin{subequations}
    \begin{align}
    \! \! \!
    I_2[\Lambda (\ket{\psi_A} \otimes \ket{\psi_B})]
    &\!=\! \sum_{i,j} \sum_{\alpha, \beta}
    \tr \! \left[
    K_{\alpha \beta}
    (\ket{\psi_A}\! \bra{\psi_A}
    \! \otimes \!
    \ket{\psi_B}\! \bra{\psi_B})^{\otimes 2}
    K_{\alpha \beta}^\dagger
    (\sigma_i^A \! \otimes \! \sigma_j^B \! \otimes \! 
    \sigma_i^A \! \otimes \! \sigma_j^B)
    \right],
    \\
    \! \! \!
    I_4[\Lambda (\ket{\psi_A} \otimes \ket{\psi_B})]
    &\!=\! \! \sum_{\substack{i,j \\ k,l}}
    \sum_{\substack{\alpha, \beta \\ \gamma, \delta}}
    \tr \! \left[
    K_{\alpha \beta \gamma \delta}
    (\ket{\psi_A}\! \bra{\psi_A}
    \! \otimes \!
    \ket{\psi_B}\! \bra{\psi_B})^{\otimes 4}
    K_{\alpha \beta \gamma \delta}^\dagger
    (\sigma_i^A \! \otimes \! \sigma_j^B  \! \otimes \!
    \sigma_k^A \! \otimes \! \sigma_j^B \! \otimes \!
    \sigma_k^A \! \otimes \!  \sigma_l^B \! \otimes \!
    \sigma_i^A \! \otimes \! \sigma_l^B )
    \right].
    \end{align}
\end{subequations}

Next, inserting the above expressions into $\mathcal{C}^{(t)} (\Lambda)$ and exchanging the integrals and trace, we obtain
\begin{subequations}
    \begin{align}
    \mathcal{C}^{(2)} (\Lambda)
    &= \sum_{i,j} \sum_{\alpha, \beta}
    \tr \! \left[
    K_{\alpha \beta}
    \mathcal{I}_{2}^{AB}
    K_{\alpha \beta}^\dagger
    (\sigma_i^A \! \otimes \! \sigma_j^B \! \otimes \! 
    \sigma_i^A \! \otimes \! \sigma_j^B)
    \right],
    \\
    \mathcal{C}^{(4)} (\Lambda)
    &= \sum_{\substack{i,j \\ k,l}}
    \sum_{\substack{\alpha, \beta \\ \gamma, \delta}}
    \tr \! \left[
    K_{\alpha \beta \gamma \delta}
    \mathcal{I}_{4}^{AB}
    K_{\alpha \beta \gamma \delta}^\dagger
    (\sigma_i^A \! \otimes \! \sigma_j^B  \! \otimes \!
    \sigma_k^A \! \otimes \! \sigma_j^B \! \otimes \!
    \sigma_k^A \! \otimes \!  \sigma_l^B \! \otimes \!
    \sigma_i^A \! \otimes \! \sigma_l^B )
    \right],
    \end{align}
\end{subequations}
where $\mathcal{I}_{t}^{AB} \vcentcolon = \int d\psi_A \int d\psi_B \, (\ket{\psi_A}\! \bra{\psi_A} \! \otimes \! \ket{\psi_B}\! \bra{\psi_B})^{\otimes t}$. Letting $P_X^{(t)} = \int d\psi_X \, (\ket{\psi_X}\! \bra{\psi_X})^{\otimes t}$ for $X=A,B$ acting on a $t$-qubit system, we have that $\mathcal{I}_{t}^{AB} = P_A^{(t)} \otimes P_B^{(t)}$. Hence we can arrive at Eqs.~(\ref{eq:ap:C2Lambda}, \ref{eq:ap:C4Lambda}).

Finally, we notice that $P_X^{(t)} = [1/(t+1)] \Tilde{P}_X^{(t)}$ for $X=A,B$, and $\Tilde{P}_X^{(t)}$ is the projector onto the \textit{symmetric subspace} of $t$ qubits \cite{Roberts_2017}. Specifically, in this case, the symmetric subspace has dimension $t+1$ and is spanned by the $t$-qubit Dicke states $\{\ket{D^{k}_{t}}\}_{k=0}^{t}$ defined as
\begin{equation}
    \ket{D^{k}_{t}} = \binom{t}{k}^{-1/2} \sum_{\pi_{t}} \pi_{t} (\ket{0}^{\otimes(t-k)} \ket{1}^{\otimes k}),
\end{equation}
where $k$ denotes the number of excitations and the sum runs over all possible permutations $\pi_{t}$ acting on $t$ qubits. Thus, $P_X^{(t)}$ is given by $P_X^{(t)} = [1/(t+1)]  \sum_{k=0}^t \ket{D^{k}_{t}} \! \bra{D^{k}_{t}}$.
\end{proof}

\section{Derivation of Eqs.~(\ref{Eq:EC2}, \ref{Eq:EC4})}\label{ap:derivation_details_ENT_CRT_noisy_case}
Here we derive Eqs.~(\ref{Eq:EC2}, \ref{Eq:EC4}) presented in the main text by showing that, for the two-qubit unitary $U_{AB} = e^{i \sum_{i=1}^3 c_i \sigma_i \otimes \sigma_i}$ with $c_i \in [0, \pi/4]$, the moments $\mathcal{C}^{(2)} (U_{AB})$ and $\mathcal{C}^{(4)} (U_{AB})$ are given by
\begin{subequations}
    \begin{align}\label{eq:ap:C2Lambdap}
    \mathcal{C}^{(2)} (U_{AB})
    &= \frac{1}{9} \Bigl\{ 15 - \cos(4c_{12}^-) - \cos(4c_{12}^+)
    - 2 \bigl[\cos(4c_1) + \cos(4c_2)\bigr] \cos(4c_3)\Bigl\},
    \\
    \label{eq:ap:C4Lambdap}
    \mathcal{C}^{(4)} (U_{AB})
    &= \frac{1}{900} \Bigl\{ 1218+4 \cos(8 c_1)-76 \cos(4c_{12}^-) +9 \cos(8c_{12}^-)+4 \cos(8 c_2) -76 \cos(4c_{12}^+) 
    \nonumber \\
    &+9 \cos(8c_{12}^+) + 8 \bigl[\cos(4 c_1)+\cos(4 c_2)\bigl] \bigl[-22 +6 \cos(4 c_1) \cos(4 c_2) \bigl] \cos(4 c_3)
    \nonumber\\
    &+ 4 \bigl[1+3 \cos(4c_{12}^-) \bigl] \bigl[1 +3 \cos(4c_{12}^+) \bigl] \cos(8 c_3)\Bigl\}.
    \end{align}
\end{subequations}
where $c_{12}^+ = c_1+c_2$ and $c_{12}^- = c_1-c_2$.

\begin{proof}
We begin by writing the following relation:
\begin{equation}
    U_{AB}
    =\exp \left(i \sum_{i=1}^3 c_i \sigma_i^A \otimes \sigma_i^B \right)
    = \sum_{a=0}^3 m_a \sigma_a^A \otimes \sigma_a^B,
    \label{eq:canonical_form_unitary}
\end{equation}
where we explicitly denoted the superscripts $A$ and $B$ in $\sigma_i^A \otimes \sigma_i^B$ and 
\begin{subequations}
    \begin{align}
    m_0 &= \cos(c_1)\cos(c_2)\cos(c_3) + i \sin(c_1)\sin(c_2)\sin(c_3), \\
    m_1 &= \cos(c_1)\sin(c_2)\sin(c_3) + i \sin(c_1) \cos(c_2) \cos(c_3), \\
    m_2 &= \sin(c_1)\cos(c_2)\sin(c_3) + i \cos(c_1)\sin(c_2)\cos(c_3),\\
    m_3 &= \sin(c_1)\sin(c_2)\cos(c_3) + i \cos(c_1) \cos(c_2) \sin(c_3). 
    \end{align}
    \label{Eq:m_to_c}
\end{subequations}
Inserting the above expressions into $I_t[U_{AB} \ket{\psi_A} \otimes \ket{\psi_B}]$, a direct calculation leads to
\begin{subequations}
    \begin{align} 
    \mathcal{C}^{(2)} (U_{AB})
    &\!=\! \sum_{i,j} \!
     \!
    \sum_{\substack{a_1,a_2, \\ b_1,b_2}} \!
    M_{a_1 a_2 b_1 b_2}
    \tr \! \left[
    (\sigma_{a_1 a_1 a_2 a_2})
    P_A^{(2)} \otimes P_B^{(2)}
    (\sigma_{b_1 b_1 b_2 b_2})
    (\sigma_{ijij})
    \right],    
    \\
    \mathcal{C}^{(4)} (U_{AB})
    &\!=\! \sum_{\substack{i,j \\ k,l}} \!
    \sum_{\substack{a_1,a_2, \\ b_1,b_2}} \!
    \sum_{\substack{a_3,a_4, \\ b_3,b_4}} \!
    M_{a_1 a_2 a_3 a_4 b_1 b_2 b_3 b_4}
    \tr \! \left[
    (\sigma_{a_1 a_1 a_2 a_2 a_3 a_3 a_4 a_4})
    P_A^{(4)} \otimes P_B^{(4)}
    (\sigma_{b_1 b_1 b_2 b_2 b_3 b_3 b_4 b_4})
    (\sigma_{ijkjklil})
    \right],
    \end{align}
\end{subequations}
where, for the sake of simplicity, we denoted that
\begin{subequations}
    \begin{align}
    M_{a_1 a_2 b_1 b_2}
    &\vcentcolon = m_{a_1} m_{a_2} m_{b_1}^* m_{b_2}^*,
    \\
    M_{a_1 a_2 a_3 a_4 b_1 b_2 b_3 b_4}
    &\vcentcolon = m_{a_1} m_{a_2} m_{a_3} m_{a_4} m_{b_1}^* m_{b_2}^* m_{b_3}^* m_{b_4}^*,
    \\
    \sigma_{abcd}
    &\vcentcolon =
    (\sigma_a^A \otimes \sigma_b^B) \otimes (\sigma_c^A \otimes \sigma_d^B),
    \\
    \sigma_{abcdefgh}
    &\vcentcolon =
    (\sigma_a^A \otimes \sigma_b^B) \otimes (\sigma_c^A \otimes \sigma_d^B)
    \otimes
    (\sigma_e^A \otimes \sigma_f^B) \otimes (\sigma_g^A \otimes \sigma_h^B).
    \end{align}
\end{subequations}

Moreover, we can further simplify the form of $\mathcal{C}^{(t)} (U_{AB})$ by splitting the trace into each subsystem $A$ and $B$. Setting
\begin{subequations}
    \begin{align}
    \label{eq:Paabb2}
    \Xi_{a_1 a_2 b_1 b_2, i i}^X
    &\vcentcolon = \tr \! \left[
    (\sigma_{a_1}^X \otimes \sigma_{a_2}^X)
    P_X^{(2)}
    (\sigma_{b_1}^X \otimes \sigma_{b_2}^X)
    (\sigma_{i}^X \otimes \sigma_{i}^X)
    \right],
    \\
    \Xi_{a_1 a_2 a_3 a_4 b_1  b_2 b_3 b_4, i j k l}^X
    &\vcentcolon = \tr \! \left[
    (\sigma_{a_1}^X \otimes \sigma_{a_2}^X \otimes \sigma_{a_3}^X \otimes \sigma_{a_4}^X)
    P_X^{(4)}
    (\sigma_{b_1}^X \otimes \sigma_{b_2}^X \otimes \sigma_{b_3}^X \otimes \sigma_{b_4}^X)
    (\sigma_{i}^X \otimes \sigma_{j}^X \otimes \sigma_{k}^X \otimes \sigma_{l}^X)
    \right],
    \label{eq:Paaaabbbb4}
    \end{align}
\end{subequations}
We can obtain
\begin{subequations}
    \begin{align} 
    \mathcal{C}^{(2)} (U_{AB})
    &\!=\! \sum_{i,j} \!
    \sum_{\substack{a_1,a_2, \\ b_1,b_2}} \!
    M_{a_1 a_2 b_1 b_2}
    \Xi_{a_1 a_2 b_1 b_2, i i}^A
    \Xi_{a_1 a_2 b_1 b_2, j j}^B,
    \\
    \mathcal{C}^{(4)} (U_{AB})
    &\!=\! \sum_{\substack{i,j \\ k,l}} 
    \sum_{\substack{a_1,a_2, \\ b_1,b_2}} \!
    \sum_{\substack{a_3,a_4, \\ b_3,b_4}} \!
    M_{a_1 a_2 a_3 a_4 b_1 b_2 b_3 b_4}
    \Xi_{a_1 a_2 a_3 a_4 b_1 b_2 b_3 b_4, i k k i}^A
    \Xi_{a_1 a_2 a_3 a_4 b_1 b_2 b_3 b_4, j j l l}^B.
    \end{align}
\end{subequations}
Finally, inserting the form of $P_X^{(t)} = [1/(t+1)] \sum_{k=0}^t \ket{D^{k}_{t}} \! \bra{D^{k}_{t}}$ into Eqs.~(\ref{eq:Paabb2}, \ref{eq:Paaaabbbb4}), one can straightforwardly evaluate the moments. After summarizing terms, we can arrive at Eqs.~(\ref{eq:ap:C2Lambdap}, \ref{eq:ap:C4Lambdap}).
\end{proof}

\section{Derivation of Eqs.~(\ref{Eq:E^(2)}, \ref{Eq:E^(4)})}\label{ap:derivation_E^(2)andE^(4)}
Here we derive Eqs.~(\ref{Eq:E^(2)}, \ref{Eq:E^(4)}) presented in the main text by showing that, for the single-qubit channel $\Gamma(\rho) = \sum_\alpha E_\alpha (\rho) E_\alpha^\dagger$, the moments $\mathcal{P}^{(2)} (\Gamma)$ and $\mathcal{P}^{(4)} (\Gamma)$ are given by
\begin{subequations}
    \begin{align}
        \mathcal{P}^{(2)} (\Gamma)
        &= \frac{1}{4} \sum_{i,j} \sum_{\alpha , \beta} \tr \left[  E_{\alpha \beta} \Phi_2(\sigma_i \otimes \sigma_i) E^\dagger_{\alpha \beta} (\sigma_j \otimes \sigma_j) \right], 
        \label{eq:app_E^(2)}
        \\
        \mathcal{P}^{(4)} (\Gamma)
        &= \frac{1}{16} \sum_{\substack{i,j \\ k,l}} \sum_{\substack{\alpha, \beta \\ \gamma, \delta}} \tr \left[  E_{\alpha \beta \gamma \delta} \Phi_4(\sigma_i \otimes \sigma_k \otimes \sigma_k \otimes \sigma_i) E^\dagger_{\alpha \beta \gamma \delta} (\sigma_j \otimes \sigma_j \otimes \sigma_l \otimes \sigma_l) \right],  
        \label{eq:app_E^(4)}
    \end{align}
\end{subequations}
where $E_{\alpha \beta} \vcentcolon = E_\alpha \otimes E_\beta$, $E_{\alpha \beta \gamma \delta} \vcentcolon = E_\alpha \otimes E_\beta \otimes E_\gamma \otimes E_\delta$, and $\Phi_t (\mathcal{O}) \vcentcolon = \int dU \,    U^{\otimes t} \mathcal{O} (U^\dagger)^{\otimes t}$ for an operator $\mathcal{O}$.

\begin{proof}
    We begin by recalling that the moments $\mathcal{P}^{(2)}(\Gamma)$ and $\mathcal{P}^{(4)}(\Gamma)$ are rewritten as
    \begin{subequations}
        \begin{align}
        \mathcal{P}^{(2)} (\Gamma)
        &= \int dU \,
        I_2[({\rm id} \otimes \Gamma) (\ket{\Psi^-_{U}}\! \bra{\Psi^-_{U}})],
        \\
        \mathcal{P}^{(4)} (\Gamma)
        &= \int dU \,
        I_4[({\rm id} \otimes \Gamma) (\ket{\Psi^-_{U}}\! \bra{\Psi^-_{U}})],
        \end{align}
    \end{subequations}
    where $I_t$ with $t=2,4$ are defined after Eq.~\eqref{eq:C^(t)(Lambda)} and $\ket{\Psi_U} = (\eins \otimes U) \ket{\Psi^-}$ with $\ket{\Psi^-} = (1/\sqrt{2})(\ket{0}\otimes\ket{1} - \ket{1}\otimes\ket{0})$. To proceed, let us express the state $\ket{\Psi^-_U}$ in the Pauli basis as
    \begin{equation}
        \ket{\Psi^-_U}\bra{\Psi^-_U}
        = \frac{1}{4}  \left(\eins \otimes \eins - \sum_{a=1,2,3} \sigma_a \, \otimes \, U\sigma_a U^\dagger \right).
    \end{equation}    
    This directly leads to
    \begin{equation}
        I_t \left[ ({\rm id} \otimes \Gamma) \bigl( \ket{\Psi^-_U}\bra{\Psi^-_U} \bigr) \right]
        = \frac{1}{4^t} I_t \left[({\rm id} \otimes \Gamma)
        \biggl( \sum_{a} \sigma_a \otimes U \sigma_a U^\dagger
        \biggr) \right].        
    \end{equation}
    Using the fact that $\tr(X) \tr(Y) = \tr(X \otimes Y)$ for operators $X,Y$, we find
    \begin{subequations}
        \begin{align}
            I_2 \left[ ({\rm id} \otimes \Gamma) \bigl( \ket{\Psi^-_U}\bra{\Psi^-_U} \bigr) \right]
            &= \frac{1}{4^2}\sum_{i,j} \sum_{a,b} \sum_{\alpha , \beta}
            \tr \left(  \sigma_{ab} \sigma_{ii} \right)
            \tr \left[  E_{\alpha \beta} U^{\otimes 2} \sigma_{ab} (U^\dagger)^{\otimes 2} E^\dagger_{\alpha \beta} \sigma_{jj} \right],
            \\
            I_4 \left[ ({\rm id} \otimes \Gamma) \bigl( \ket{\Psi^-_U}\bra{\Psi^-_U} \bigr) \right]
            &= \frac{1}{16^2}\sum_{\substack{i,j \\ k,l}}
            \sum_{\substack{a,b \\ c,d}}
            \sum_{\substack{\alpha, \beta \\ \gamma, \delta}}
            \tr \left(  \sigma_{abcd} \sigma_{ikki} \right)
            \tr \left[  E_{\alpha \beta \gamma \delta} U^{\otimes 4} \sigma_{abcd} (U^\dagger)^{\otimes 4} E^\dagger_{\alpha \beta \gamma \delta} \sigma_{jjll} \right],
        \end{align}
    \end{subequations}
    where we denoted that $\sigma_{ij} = \sigma_i \otimes \sigma_j$ and $\sigma_{ijkl} = \sigma_i \otimes \sigma_j \otimes \sigma_k \otimes \sigma_l$. By employing the formulas $\tr(\sigma_{ab} \sigma_{ii}) = 4 \delta_{ai} \delta_{bi}$ and $\tr (\sigma_{abcd} \sigma_{ikki}) = 16 \delta_{ai} \delta_{bk} \delta_{ck} \delta_{di}$ for $a,b,c,d,i,k =1,2,3$, and by denoting $\Phi_2(\sigma_{ii}) = \int \, dU \, U^{\otimes 2} \sigma_{ii} (U^\dagger)^{\otimes 2}$ and $\Phi_4(\sigma_{ikki}) = \int \, dU \, U^{\otimes 4} \sigma_{ikki} (U^\dagger)^{\otimes 4}$, we can arrive at Eqs.~(\ref{eq:app_E^(2)}, \ref{eq:app_E^(4)}).
\end{proof}

\section{Derivation of Eqs.~(\ref{eq:B2_unital_form}, \ref{eq:B4_unital_form})}\label{ap:derivation_B2andB4_unital_form}
Here we derive Eqs.~(\ref{eq:B2_unital_form}, \ref{eq:B4_unital_form}) presented in the main text by showing that, for the unital single-qubit channel $\Gamma_{\rm U}(\rho) = \sum_{\alpha=0}^3 d_\alpha \sigma_\alpha \rho \sigma_\alpha$, the moments $\mathcal{P}^{(2)} (\Gamma_{\rm U})$ and $\mathcal{P}^{(4)} (\Gamma_{\rm U})$ are given by
\begin{subequations}
    \begin{align}
    \label{eq:B2form_ap}
    \mathcal{P}^{(2)} (\Gamma_{\rm U})
    = \sum_{\alpha=1}^3  D_\alpha^2, 
    \\
    \label{eq:B4form_ap}
     \mathcal{P}^{(4)} (\Gamma_{\rm U})
    =  \sum_{\alpha=1}^3 D_\alpha^4,
    \end{align}
\end{subequations}
where $D_\alpha = 2( d_\alpha +  d_0) - 1$ and $d_\alpha \in [0,1]$.

\begin{proof}
    We begin by inserting $E_\alpha = \sqrt{d_\alpha} \sigma_\alpha$ into the expressions in Eqs.~(\ref{eq:app_E^(2)}, \ref{eq:app_E^(4)}). By rearranging terms inside the trace with the cyclic property, we obtain
    \begin{subequations}
        \begin{align}
            \mathcal{P}^{(2)} (\Gamma_{\rm U})
            &= \frac{1}{4} \sum_{i,j}
            \tr \left[ \Phi_2(\sigma_i \otimes \sigma_i) \, \chi_2(\sigma_j \otimes \sigma_j) \right], 
            \\
            \mathcal{P}^{(4)} (\Gamma_{\rm U})
            &= \frac{1}{16} \sum_{\substack{i,j \\ k,l}}
            \tr \left[  \Phi_4(\sigma_i \otimes \sigma_k \otimes \sigma_k \otimes \sigma_i) \, \chi_4(\sigma_j \otimes \sigma_j \otimes \sigma_l \otimes \sigma_l) \right].  
        \end{align}
    \end{subequations}    
    where we denoted
    \begin{subequations}
        \begin{align}
            \chi_2(\sigma_j \otimes \sigma_j)
            &\vcentcolon = \sum_{\alpha , \beta}
            d_\alpha d_\beta \,
            (\sigma_\alpha \otimes \sigma_\beta)
            (\sigma_j \otimes \sigma_j)
            (\sigma_\alpha \otimes \sigma_\beta)
            \\
            \chi_4 (\sigma_j \otimes \sigma_j \otimes \sigma_l \otimes \sigma_l)
            &\vcentcolon = \sum_{\substack{\alpha, \beta \\ \gamma, \delta}}
            d_\alpha d_\beta d_\gamma d_\delta \,
            (\sigma_\alpha \otimes \sigma_\beta \otimes \sigma_\gamma \otimes \sigma_\delta)
            (\sigma_j \otimes \sigma_j \otimes \sigma_l \otimes \sigma_l)
            (\sigma_\alpha \otimes \sigma_\beta \otimes \sigma_\gamma \otimes \sigma_\delta).
        \end{align}
    \end{subequations}
    Using the formula $(X_1 \otimes Y_1) (X_2 \otimes Y_2) = X_1 X_2 \otimes Y_1 Y_2$ for operators $X_1, X_2, Y_1, Y_2$ and the fact that $\sigma_m \sigma_n \sigma_m = (1-\delta_{m0})(2\delta_{mn} \sigma_m - \sigma_n) + \delta_{m0} \sigma_n$ for $m=0,1,2,3$ and $n=1,2,3$, we obtain
    \begin{subequations}
        \begin{align}
            \chi_2(\sigma_j \otimes \sigma_j)
            &=  D_j^2 \sigma_j \otimes \sigma_j,
            \\
            \chi_4 (\sigma_j \otimes \sigma_j \otimes \sigma_l \otimes \sigma_l)
            &= D_j^2 D_l^2 \sigma_j \otimes \sigma_j \otimes \sigma_l \otimes \sigma_l,
        \end{align}
    \end{subequations}
    where we denoted $D_i \vcentcolon = 2( d_i +  d_0) - 1$. Substituting these results to $\mathcal{P}^{(2)} (\Gamma_{\rm U})$ and $ \mathcal{P}^{(4)} (\Gamma_{\rm U})$ yields
    \begin{subequations}
        \begin{align}
            \mathcal{P}^{(2)} (\Gamma_{\rm U})
            &= \frac{1}{4} \sum_{i,j}  D_j^2
            \tr \left[ \Phi_2(\sigma_i \otimes \sigma_i) (\sigma_j \otimes \sigma_j) \right], \label{eq:app_P^(2)}
            \\
            \mathcal{P}^{(4)} (\Gamma_{\rm U})
            &= \frac{1}{16} \sum_{\substack{i,j \\ k,l}} D_j^2 D_l^2
            \tr \left[  \Phi_4(\sigma_i \otimes \sigma_k \otimes \sigma_k \otimes \sigma_i) (\sigma_j \otimes \sigma_j \otimes \sigma_l \otimes \sigma_l) \right]. \label{eq:app_P^(4)}
        \end{align}
    \end{subequations}
    
    Finally, we note the following identities provided by the supplemental material of Ref.~\cite{Wyderka2023}:
    \begin{subequations}
        \begin{align}
            \tr \left[ \Phi_2(\sigma_a \otimes \sigma_b) (\sigma_j \otimes \sigma_j) \right]
            &= \frac{4}{3} \delta_{ab}, 
            \\
            \tr \left[  \Phi_4(\sigma_a \otimes \sigma_b \otimes \sigma_c \otimes \sigma_d) (\sigma_j \otimes \sigma_j \otimes \sigma_j \otimes \sigma_j) \right] &= \frac{16}{15} \left( \delta_{ab} \delta_{cd} + \delta_{ac} \delta_{bd} + \delta_{ad} \delta_{cb} \right).
        \end{align}
    \end{subequations}
    Then we have that $\tr \left[ \Phi_2(\sigma_i \otimes \sigma_i) (\sigma_j \otimes \sigma_j) \right] = 4/3$ and 
    $\tr \left[  \Phi_4(\sigma_i \otimes \sigma_k \otimes \sigma_k \otimes \sigma_i) (\sigma_j \otimes \sigma_j \otimes \sigma_l \otimes \sigma_l) \right] = ({16}/{15}) (2\delta_{ik} + 1)$, where the trace vanishes for $l \neq j$. Inserting this into Eqs.~(\ref{eq:app_P^(2)}, \ref{eq:app_P^(4)}), we can arrive at Eqs.~(\ref{eq:B2form_ap}, \ref{eq:B4form_ap}).
\end{proof}

\section{Optimization details of Figs.~\ref{fig:EC_plot}, \ref{fig:EB_plot}, and \ref{fig:Pt_Ct}}\label{ap:optimization_details}
The optimization procedures implemented in Figs.~\ref{fig:EC_plot}, \ref{fig:EB_plot}, and \ref{fig:Pt_Ct} were performed using the \texttt{trust-constr} algorithm implemented in the \texttt{SciPy} \texttt{minimize} function \cite{virtanen2020scipy, scipy_minimize_docs}. The solver was executed with numerical tolerances: the step-size tolerance, gradient-norm tolerance, and barrier tolerance, all set to $10^{-12}$, while the maximum number of iterations is $10^{4}$. These parameters, respectively, control the magnitude of the final optimization step, the accuracy of the first-order optimality conditions, and the precision with which inequality constraints are satisfied. The reported results correspond mostly to the cases in which the convergence criteria were met before reaching the maximum iteration limit; otherwise, the precision achieved may be lower than the chosen tolerance.

\twocolumngrid
\bibliography{ref.bib}
\end{document}